\def\given{\,|\,}
\def\biggiven{\,\big{|}\,}
\def\E{{ E}}
\def\P{{ P}}
\def\Q{{\mathrm Q}}
\def\Var{{\mathrm{Var}}}
\def\d{{ d}}
\newcommand{\abs}[1]{\left| #1 \right|}
\newcommand{\norm}[1]{\left\lVert #1 \right\rVert}
\newcommand{\ind}{1} 
\newcommand\yestag{\addtocounter{equation}{1}\tag{\theequation}}
\DeclareMathOperator*{\argmin}{argmin}
\DeclareMathOperator*{\esssup}{ess\,sup}
\DeclareMathOperator*{\essinf}{ess\,inf}
\title[Stabilized Inverse Probability Weighting via Isotonic Calibration]{Stabilized Inverse Probability Weighting via Isotonic Calibration}
\begin{document}

\maketitle

\begin{abstract}%
  Inverse weighting with an estimated propensity score is widely used by estimation methods in causal inference to adjust for confounding bias. However, directly inverting propensity score estimates can lead to instability, bias, and excessive variability due to large inverse weights, especially when treatment overlap is limited. In this work, we propose a post-hoc calibration algorithm for inverse propensity weights that generates well-calibrated, stabilized weights from user-supplied, cross-fitted propensity score estimates. Our approach employs a variant of isotonic regression with a loss function specifically tailored to the inverse propensity weights. Through theoretical analysis and empirical studies, we demonstrate that isotonic calibration improves the performance of doubly robust estimators of the average treatment effect.
\end{abstract}

\begin{keywords}%
   inverse probability weighting, calibration, isotonic regression, propensity, balancing%
\end{keywords}

\section{Introduction}

\subsection{Background}

Estimation and inference tasks in causal inference often involve inverse weighting by an estimate of the propensity score --- the probability of receiving treatment given a set of covariates \citep{rosenbaum1983central}. This includes, for example, prominent methods for inference of counterfactual means and average treatment effects such as inverse probability weighting (IPW) \citep{rosenbaum1983central} and doubly-robust estimation strategies like augmented inverse probability weighting \citep{robinsCausal, bang2005doubly}, targeted maximum likelihood estimation \citep{vanderLaanRose2011}, and double machine learning \citep{DoubleML}.  Despite their wide usage and theoretical motivation, these approaches are prone to instability, bias, and excessive variability induced by large inverse propensity weights, especially in settings with limited treatment overlap \citep{wang2006diagnosing, crump2009dealing, petersen2012diagnosing, zhou2020propensity}. As a result, there is a growing interest in developing robust methods to estimate inverse propensity weights, aiming to mitigate the influence of extreme values in these weights. To address this issue, various estimation strategies have been proposed, which can be broadly categorized into two groups: direct and inversion-based methods.

Inversion methods aim to construct an improved estimator of inverse propensity weights by applying post-hoc adjustments to an initial propensity score estimator before inverting it. One advantage of inversion methods is their convenience, as the initial propensity score estimator can be obtained using any off-the-shelf regression algorithm. Notable inversion methods include propensity score truncation (or clipping) and propensity score trimming. Propensity score truncation constrains the estimated propensity scores within a range bounded away from 0 and 1 \citep{wang2006diagnosing, bembom2008data, cole2008constructing, petersen2012diagnosing, gruber2022data}. Propensity score trimming is a related approach that involves excluding observations with extreme propensity score weights, i.e., assigning them zero weight \citep{imbens2004nonparametric, crump2009dealing, petersen2012diagnosing}. The choice of truncation or trimming threshold can be predetermined or data-adaptive. Pre-specified thresholds, however, may yield inconsistent estimates of propensity score weights, leading to biased estimations of treatment effects. To address this bias and improve the reliability of inferences, data-adaptive methods for threshold selection, such as collaborative targeted maximum likelihood estimation, have been proposed \citep{bembom2008data, ju2019adaptive, gruber2022data}. While inversion methods can lead to improved inference, they are limited by the fact that even when the propensity score estimator is properly truncated or trimmed, it may not serve as a good estimator of the true inverse propensity weights.

Direct learning methods improve upon inversion methods by using supervised statistical learning tools to directly estimate the inverse propensity weights through the optimization of a specifically tailored objective function. Notable direct learning methods include entropy \citep{hainmueller2012entropy} and covariate balancing weights \citep{imai2014covariate, zubizarreta2015stable, zhao2019covariate, wang2020minimal}, minimax linear estimation \citep{hirshberg2019minimax, hirshberg2021augmented}, and automatic debiased machine learning \citep{chernozhukov2018double, chernozhukov2021automatic, chernozhukov2022riesznet, chernozhukov2022automatic,bruns2023augmented}. These approaches have been demonstrated, in simulations, to yield less biased and more stable parameter estimates with improved confidence interval coverage compared to inversion methods relying on propensity score estimation \citep{zhao2019covariate, chernozhukov2021automatic}. Even so, when estimated using aggressive machine learning algorithms, such as neural networks or gradient-boosted trees, these inverse weights may still exhibit poor calibration \citep{guo2017calibration}. Furthermore, the practical applicability of direct learning methods is hindered by their dependence on nonstandard loss functions, which are not readily available in standard software packages for machine learning. As a consequence of this limitation, direct methods may be employed in conjunction with parametric models, which may lack the necessary flexibility to accurately estimate inverse propensity weights. While some machine learning algorithms, such as \texttt{xgboost} gradient-boosted trees \citep{chen2016xgboost} and neural networks \citep{pedregosa2011scikit}, provide flexibility to accommodate customized loss functions, the extent of customization, scalability, and numerical efficiency can vary among commonly used learning algorithms and software choices. 

\subsection{Contributions of this work}

In this work, we introduce a distribution-free approach for calibrating inverse propensity weights directly from user-supplied propensity score estimates. We propose a novel algorithm, isotonic calibrated inverse probability weighting (IC-IPW), which uses isotonic regression to transform cross-fitted propensity score estimates into well-calibrated inverse propensity weights. Departing from traditional inversion methods, IC-IPW learns an optimal monotone transformation of the propensity score estimator through a variant of isotonic regression that minimizes a loss function tailored to the inverse propensity score. This approach is computationally efficient, computable in linear time, and easily implemented using standard isotonic regression software. It retains the flexibility and simplicity of inversion-based estimation while providing robustness in settings with limited overlap. To demonstrate the advantages of IC-IPW, we examine its use with augmented inverse probability weighted (AIPW) estimators for estimating the average treatment effect (ATE). We show that IC-IPW can relax the conditions required for achieving asymptotic linearity and nonparametric efficiency of AIPW, while also improving empirical performance in terms of bias and coverage.

Our work contributes to the recent studies of \cite{gutman2022propensity}, \cite{deshpande2023calibrated}, and \citet{ballinari2024improvingfinitesampleperformance} on the benefits of calibrated propensity score estimators for causal inference applications. In their empirical study, \cite{gutman2022propensity} found Platt’s scaling to be beneficial for calibrating propensity score estimators in causal inference. Additionally, \cite{deshpande2023calibrated} propose using propensity score calibration in IPW and AIPW estimators and provide theoretical guarantees for kernel density-based calibration using a log-likelihood loss. Similarly, \citet{ballinari2024improvingfinitesampleperformance} proposed using propensity score calibration within the double machine learning framework to improve the finite-sample performance of ATE estimators. They examined the empirical performance of various probability calibration procedures, including Platt’s and temperature scaling and a form of isotonic calibration that differs from ours. Distribution-free calibration guarantees for isotonic calibration of regression and conditional average treatment effect functions were established in \cite{van2023causal}. We build on this work by establishing that distribution-free calibration guarantees for isotonic calibration can be extended to the inverse propensity score based on a tailored loss function.

Our work differs from prior studies of propensity score calibration in several key aspects. First, differing from prior work, we propose calibrating the inverse propensity score rather than the propensity score itself and introduce a measure for balance-based calibration of inverse propensity weights. This distinction is crucial because calibration methods developed for the propensity score may not yield well-calibrated inverse propensity weights. In this sense, our work is closely related to \cite{deshpande2023calibrated}, which considers calibrating propensity score estimates using the chi-squared divergence loss. While the authors do not explicitly consider calibration of the inverse propensity score, the chi-squared divergence loss can be reformulated as a loss for the inverse propensity score. Second, we propose a novel calibration procedure based on isotonic regression to construct calibrated inverse propensity weights, which is free from tuning parameters. Furthermore, we formally establish that our isotonic regression procedure provides distribution-free calibration guarantees that impose no assumptions on the functional form of the data-generating distribution. In contrast, Platt scaling approach of \cite{gutman2022propensity} relies on parametric assumptions, and the kernel-smoothing method of \cite{deshpande2023calibrated} depends on smoothness assumptions and requires tuning of the kernel bandwidth parameter. Third, in the context of debiased machine learning, we combine cross-fitting and calibration in a novel, data-efficient manner. Specifically, we apply calibration once to cross-fitted propensity score estimators, unlike the approach of \citet{ballinari2024improvingfinitesampleperformance}, which repeats the calibration step across multiple data splits.

In our companion paper on calibrated debiased machine learning \citep{van2024automatic}, we establish the role of nuisance function calibration in constructing doubly robust asymptotically linear estimators, thereby providing not only doubly robust consistency but also facilitating doubly robust inference (e.g., confidence intervals and hypothesis tests). Notably, Theorem 4 in \cite{van2024automatic} shows that IPW estimators of linear functionals, such as the ATE, with weights obtained using our proposed IC-IPW procedure are asymptotically linear, even when the initial propensity score estimator is derived using flexible machine learning tools.

 The outline of this manuscript is as follows. Our proposed IC-IPW algorithm is outlined in Section \ref{section::alg} and its theory is presented in Section \ref{section::theory}. In Section \ref{section::theory::cal}, we give calibration and mean square error guarantees for IC-IPW, and in Section \ref{section::theory::ATE} we theoretically illustrate the benefits of IC-IPW for ATE estimation using AIPW estimator. Finally, Section \ref{section::simulation} presents simulation experiments to evaluate the estimator's performance under different levels of overlap.

\section{Methodology: stabilized inverse weighting via isotonic calibration}
\label{section::alg}

\subsection{Notation}
\label{section::setup}

Suppose we observe \( n \) independent and identically distributed observations, \( O_1, \dots, O_n \), of the data structure \( O \coloneq (W, A, Y) \) drawn from a probability distribution \( P_0 \). In this data structure, \( W \in \mathbb{R}^d \) is a vector of baseline covariates, \( A \in \mathcal{A} \subset \mathbb{R} \) is a discrete treatment assignment taking values in \( \mathcal{A} \), and \( Y \in \mathbb{R} \) is a bounded outcome. Without loss of generality, we assume \( \{0,1\} \subset \mathcal{A} \), where \( \{0\} \) represents a reference treatment level and \( \{1\} \) a treatment level of interest. For a given distribution \( \P \) and realization \( (a, w) \) of \( (A, W) \), we denote the (generalized) propensity score by \( \pi_{P}(a \mid w) \coloneq \P(A = a \mid W = w) \) and the outcome regression by \( \mu_{P}(a, w) \coloneq E_P[Y \mid A = a , W = w] \). We also denote the inverse propensity score by \( \alpha_{P}(a, w) \coloneq 1 / \pi_{P}(a \mid w) \) and the conditional average treatment effect (CATE) function by \( \tau_{P}(w) \coloneq \mu_{P}(1, w) - \mu_{P}(0, w) \). To simplify notation, we write \( S_0 \) for summaries \( S_{P_0} \) of the true distribution \( P_0 \). Throughout this text, for a function \( f \), we use \( f^{-1} \) to denote the reciprocal function, defined pointwise as \( f^{-1}(x) := \frac{1}{f(x)} \). All essential suprema are taken with respect to \( P_0 \) (or its marginal distributions).


\subsection{Balance-based calibration for inverse propensity weights}
\label{sec::balcal}

For a treatment level \( a_0 \in \mathcal{A} \), let \( \pi_n(a_0 \mid \cdot) \) be an arbitrary estimator of \( \pi_0(a_0 \mid \cdot) \), obtained using, for example, flexible statistical learning tools. Specifically, \( \pi_n(a_0 \mid \cdot) \) could be derived by nonparametrically regressing the treatment indicator \( \ind(A = a_0) \) on covariates. Alternatively, we could set \( \pi_n(a_0 \mid \cdot) \coloneq \alpha_n^{-1}(a_0 \mid \cdot) \), where \( \alpha_n(a_0 \mid \cdot) \) is an estimator of \( \pi_0^{-1}(a_0 \mid \cdot) \) obtained using loss functions for inverse propensity weights \citep{zhao2019covariate, chernozhukov2021automatic}.

We say an estimator $\pi_n^{-1}(a_0 \given \cdot)$ of the inverse propensity score function $\pi_0^{-1}(a_0\given \cdot)$ is \textit{perfectly calibrated} for covariate balance if, for all functions $h: \mathbb{R} \rightarrow \mathbb{R}$ in $L^2(P_0)$,
\begin{equation*}
    \int (h \circ \pi_n)(a_0 \given w)\left\{\frac{\pi_0(a_0 \given w)}{\pi_n(a_0 \given w)}-1 \right\} dP_0(w) = 0.
\end{equation*}
In words, a perfectly calibrated estimator $\pi_n^{-1}(a_0 \given \cdot)$ achieves covariate balance, in the sense of \cite{imai2014covariate} and \cite{ben2021balancing}, within subgroups defined by levels of estimated values of the propensity score. The estimator $\pi_n^{-1}(a_0 \given \cdot)$ is perfectly calibrated for covariate balance if and only if $P_0$-almost surely $\pi_n(a_0 \given W) =  \gamma_0^{(a_0)}(W, \pi_n)$, where, for any function $\pi$, we define the \textit{calibration function} $ w \mapsto \gamma_0^{(a_0)}(w, \pi)\coloneq E_0[\pi_0(a_0 \given W) \given \pi(a_0\given W) = \pi(a_0 \given w)]$. 

As a measure of deviation from perfect calibration, we consider the following $\chi^2$-squared divergence measure \citep{pearson1900x} of calibration error for inverse propensity weights:
\begin{equation}
    {\rm CAL}^{(a_0)}(\pi^{-1}) \coloneq \int \left\{\frac{\gamma_0^{(a_0)}(w, \pi)}{\pi(a_0 \given w)} -1 \right\}^2 d P_{0}(w). \label{eqn::calerror}
\end{equation}
Intuitively, for an inverse propensity score estimator $\pi_n^{-1}(a_0 \mid \cdot)$ that is well-calibrated in the sense that ${\rm CAL}^{(a_0)}(\pi_n^{-1})$ is small, a large inverse propensity weight should only be assigned to an individual when the true propensity score $\pi_0(a_0 \mid \cdot)$ is, on average, small among individuals with that weight. Consequently, in scenarios with limited overlap, a well-calibrated estimator should avoid assigning excessively large weights to observations. Notably, \cite{deshpande2023calibrated} established that calibration with respect to this $\chi^2$-divergence measure leads directly to smaller estimation errors for IPW and AIPW estimators of the ATE (see Lemma 3.3 and Theorem 3.5 of \cite{deshpande2023calibrated}). Therefore, developing calibration procedures that guarantee inverse propensity score estimators with small calibration error under this measure is of great interest for causal inference.



We note that \eqref{eqn::calerror} differs from the $\ell^2$-integrated calibration error for the propensity score, commonly used in probability calibration \citep{gupta2021distribution}, which is given by
\begin{equation}
    \int \left\{\gamma_0(a_0)(w, \pi) - \pi(a_0 \mid w) \right\}^2 dP_0(w).
    \label{eqn::calerrorbad}
\end{equation} 
Notably, a propensity score estimator $\pi_n$ that is well-calibrated with respect to this calibration measure may not lead to well-calibrated inverse propensity weights in the sense of \eqref{eqn::calerror}. This discrepancy arises because the calibration error of \eqref{eqn::calerrorbad} is insensitive to scenarios with limited overlap, where the inverse propensity score $\pi_n^{-1}(a_0 \mid w)$ may take large values. Distribution-free calibration guarantees for isotonic calibration of regression functions (such as the propensity score) with respect to this measure are a consequence of \cite{van2023causal}.


\subsection{Isotonic calibrated inverse probability weighting}
\label{sec::ICIPWpart}
In this section, we introduce our novel calibration algorithm, IC-IPW, which uses isotonic regression to construct inverse propensity weights calibrated according to the definition in \eqref{eqn::calerror}.

IC-IPW is motivated by the observation that inversion strategies, such as propensity score truncation, commonly apply a nonincreasing monotone transformation to a propensity score estimator \( \pi_n(a_0 \mid \cdot) \) to derive an inverse propensity score estimator. The simplest example of this is the estimator \( \pi_n^{-1}(a_0 \mid \cdot) \), corresponding to the inversion map \( x \mapsto x^{-1} \). Another option is the truncated estimator given by $\{(c_n \vee \pi_n(a_0 \mid \cdot)) \wedge (1 - c_n)\}^{-1}$, where $c_n \in (0, 1/2]$ is a possibly data-adaptive threshold, corresponding to the transformation $x \mapsto \{(c_n \vee x) \wedge (1 - c_n)\}^{-1}$. The optimal nonincreasing transformation $f_{n,0}^{(a_0)}: \mathbb{R} \rightarrow \mathbb{R}$ of $\pi_n(a_0 \mid \cdot)$, which minimizes the mean squared error for $(w, a) \mapsto \ind(a = a_0) \pi_0^{-1}(a_0 \mid \cdot)$, is given by
\begin{align*}
    f_{n,0}^{(a_0)} &\in \argmin_{\theta \in \mathcal{F}_{\rm anti}} E_0 \big[ \ind(A = a_0) \{(\theta \circ \pi_n)(a_0 \mid W) - \pi_0^{-1}(a_0 \mid W)\}^2 \big] \\
    &= \argmin_{\theta \in \mathcal{F}_{\rm anti}} E_0 \big[ \ind(A = a_0)(\theta \circ \pi_n)(a_0 \mid W)^2 - 2 (\theta \circ \pi_n)(a_0 \mid W) \big].
\end{align*}
Notably, the mapping $(w, a) \mapsto f_{n,0}^{(a)} \circ \pi_n(a \mid w)$ is a mean-squared optimal estimator of the inverse propensity score $(w, a) \mapsto \pi_0^{-1}(a \mid w)$.


Our proposed IC-IPW estimator of $\pi_0^{-1}(a_0 \mid \cdot)$ is given by $\alpha_n^*(a_0 \given \cdot) = (f_n^{(a_0)} \circ \pi_n)(a_0 \given \cdot) $ for a data-dependent monotone nonincreasing transformation $f_n^{(a_0)}: \mathbb{R} \rightarrow \mathbb{R} \cup \{\infty\}$ obtained via antitonic empirical risk minimization as:
\begin{equation}
 f_n^{(a_0)}  \in \argmin_{\theta \in \mathcal{F}_{\rm anti}} \sum_{i=1}^n  \big[\ind(A_i = a_0) (\theta \circ \pi_n)^2(a_0 \given W_i) - 2(\theta \circ \pi_n)(a_0 \given W_i) \big],   \label{eqn::ERManti}
\end{equation} 
where $\mathcal{F}_{\rm anti} \coloneq \{\theta:\mathbb{R} \rightarrow \mathbb{R}; \; \theta \text{ is monotone  nonincreasing}\}$ denotes the space of antitonic functions. We follow \citet{groeneboom1993isotonic} in taking the unique c\`{a}dl\`{a}g piecewise constant solution to the antitonic regression problem, which only has jumps at observed values in \( \{\pi_n(a_0 \given W_i): i \in [n]\} \). The loss function \( (o, \alpha) \mapsto \ind(a = a_0) \alpha(a \mid w)^2 - 2 \alpha(a_0 \mid w) \), used in \eqref{eqn::ERManti}, has been applied for learning balancing weights \citep{zhao2019covariate} as well as for estimating the inverse propensity score \citep{chernozhukov2018double}. However, to the best of our knowledge, this loss function has not been previously utilized for isotonic regression or calibration purposes.

The function \( f_n^{(a_0)} \) may take an infinite value when there exists a boundary observation \( i(1) \in [n] \) with {\small \( \pi_n(a_0 \mid W_{i(1)}) = \min_{i \in [n]}\{\pi_n(a_0 \mid W_{i})\} \)}  such that \( A_{i(1)} \neq a_0 \). This is related to the fact that isotonic regression can exhibit poor behavior at the boundary of the input space \citep{groeneboom1993isotonic}. In the context of causal inference and debiased machine learning, infinite weights among observations \( i \) with \( A_i \neq a_0 \) are usually not problematic, as typically only observations in the relevant treatment arm with \( A_i = a_0 \) need to be reweighted using these weights. That is, only estimates of {\small\( \{ \frac{1(A_i = a_0)}{\pi_0(a_0 \mid W_i)} : i \in [n] \} \)} are required. Nonetheless, we can address this issue by applying a boundary correction that adaptively truncates \( f_n^{(a_0)} \) to lie within the range \( [1, b_n^{(a_0)}] \), where the data-dependent truncation level is given by the largest finite value, {\small \( b_n^{(a_0)} = \max_{i \in [n]: A_i = a_0} f_n^{(a_0)}(\pi_n(a_0 \mid W_i)) \)}. This truncation level preserves the value of { \small \( 1(A_i = a_0) f_n^{(a_0)}(\pi_n(a_0 \mid W_i)) \) } for all \( i \in [n] \), while ensuring that the weight estimates {\small \( f_n^{(a_0)}(\pi_n(a_0 \mid W_i)) \)} with \( A_i \neq a_0 \) remain finite. Alternatively, one can constrain the minimum number of observations in each constant segment of the antitonic solution, ensuring that at least one treated observation lies in each level set of \( f_n^{(a_0)} \). We provide \texttt{R} and \texttt{Python} code implementing both the truncation and this constraint in Appendix \ref{sec:code}, leveraging a constrained implementation of isotonic regression in \texttt{xgboost} \citep{chen2016xgboost}.

The first-order conditions characterizing the minimizing solution imply that the proposed estimator $\alpha_n^*(a_0 \mid \cdot)$ solves, for any map $h: \mathbb{R} \cup \{\infty\} \rightarrow \mathbb{R}$ with $h(\infty) := 0$, the covariate balancing equation:
\begin{equation}
    \frac{1}{n} \sum_{i=1}^n (h \circ \alpha_n^*)(a_0 \mid W_i) \left\{\ind(A_i = a_0)\alpha_n^*(a_0 \mid W_i) - 1\right\} = 0. \label{firstorderCond}
\end{equation}
Hence, the calibrated weight estimator $\alpha_n^*(a_0 \mid \cdot)$ achieves exact empirical covariate balance within levels of the estimated weights. Notably, by choosing \( h \) as a level set indicator, we can show that \(\frac{1}{N_t} \sum_{i \in I_t} \ind(A_i = a_0) \alpha_n^*(a_0 \mid W_i) = 1\)
for all \( t \in \mathbb{R} \), where \( N_t = \sum_{i=1}^n 1(\alpha_n^*(a_0 \mid W_i) = t) \) and \( I_t = \{ i : \alpha_n^*(a_0 \mid W_i) = t \} \). This indicates that the calibrated weight estimators are automatically stabilized among level sets of the estimated weights \citep{robins2000marginal}.
As a consequence, the IC-IPW estimator is piecewise constant, where each unique value corresponds to the inverse of an empirical mean over a subset of $\{1(A_1=a_0), \dots, 1(A_n=a_0)\}$. Specifically, for each observation index $i_0 \in [n]$, there exist endpoints $\ell_0 \leq m_0$ that depend on the sample size $n$, such that
\[
\alpha_{n}^*(a_0 \mid W_{i_0}) = \frac{m_0 - \ell_0 + 1}{\sum_{\ell_0 \leq r \leq m_0} 1(A_{i(r)} = a_0)},
\]
where $(i(r): r \in [n])$ is an appropriate permutation of $[n]$.

We can show that the antitonic transformation satisfies \( f_n^{(a_0)} =  1/g_n^{(a_0)} \), where \( g_n^{(a_0)} \) is the isotonic calibrator of the propensity score \( \pi_0(a_0 \mid \cdot) \) obtained via isotonic regression as:
\begin{equation}
    g_n^{(a_0)}  \in \argmin_{\phi \in \mathcal{F}_{\rm iso}} \sum_{i=1}^n  \big\{\ind(A_i = a_0) - (\phi \circ \pi_n)(a_0 \mid X_i) \big\}^2,  \label{eqn::ERMiso}
\end{equation} 
where \( \mathcal{F}_{\rm iso} \coloneq \{\theta:\mathbb{R} \rightarrow \mathbb{R}; \; \theta \text{ is monotone nondecreasing}\} \) denotes the space of isotonic functions.
Thus, the optimization problem given by \eqref{eqn::ERManti} can be computed via isotonic probability calibration for \( \pi_0(a_0 \given \cdot) \) using \( \ind(A = a_0) \) as the outcome \citep{niculescu2005predicting}. For a binary treatment \( A \in \{0,1\} \), it can be shown that \( g_n^{(1)} = 1 - g_n^{(0)} \), such that a single isotonic regression suffices to calibrate the propensity weights.

\begin{algorithm}[!htb]
\begin{algorithmic}[1]
\caption{Cross-fitted Isotonic-Calibrated Inverse Probability Weighting}
\label{alg:1}\vspace{.05in}
\REQUIRE   dataset $\mathcal{D}_n = \{O_i: i \in [n]\}$, \# of cross-fitting splits $k$, $a_0 \in \mathcal{A}$
\vspace{.03in}
\STATE partition $\mathcal{D}_n$ into datasets $\mathcal{C}^{(1)},\mathcal{C}^{(2)},\ldots,\mathcal{C}^{(J)}$;
\FOR {$s = 1,2,\ldots,J$}
\STATE let $j(i)=s$ for each $i\in \mathcal{C}^{(s)}$;
\STATE get initial estimator $\pi_{n,s}$ of $\pi_0$ from $\mathcal{E}^{(s)} \coloneq \mathcal{D}_n \backslash \mathcal{C}^{(s)}$.
\ENDFOR
\STATE set $f_n^{(a_0)} \coloneq 1/\{c_n^{(a_0)} \vee g_n^{(a_0)}\}$ where $c_n^{(a_0)} = \min_{i \in [n]: A_i = a_0} g_n^{(a_0)}(\pi_{n,j(i)}(a_0 \given W_i))$ and $g_n^{(a_0)}$ is obtained using isotonic regression as
\vspace{-0.1cm}
\begingroup\small
$$
    g_n^{(a_0)} \in  \argmin_{\phi \in \mathcal{F}_{\rm iso}} \sum_{i=1}^n  \big\{\ind(A_i=a_0) - (\phi \circ \pi_{n,j(i)})(a_0 \given W_i) \big\}^2.
$$
\endgroup
\vspace{-0.1cm}
\STATE set $\alpha_{n,j}^*(a_0\mid \cdot) :=  f_n^{(a_0)}(\pi_{n,j}(a_0 \given\cdot ))$ for each $j \in [J]$;
\RETURN {\small weight functions $\{\alpha_{n,j}^*(a_0\mid \cdot) : j \in [J]\}$ and estimates $\{\alpha_{n,j(i)}^*(a_0\mid W_i) : i \in [n]\}$}
\end{algorithmic}

\vspace{.05in}
\end{algorithm}

In causal inference using debiased and targeted machine learning, cross-fitting inverse propensity weights is a common strategy to relax conditions on estimator complexity and mitigate overfitting \citep{van2011cross, DoubleML}. Similarly, to ensure calibration guarantees under weak conditions, we recommend applying cross-fitting when fitting the initial propensity score estimator. Algorithm \ref{alg:1} outlines the procedure for IC-IPW with cross-fitting and an adaptive truncation-based boundary correction. Specifically, it involves partitioning the available data into $J$ folds, where each fold computes an initial propensity score estimator $\pi_{n,j}(a_0 \given \cdot)$ using only the data from the complementary folds. The resulting out-of-fold estimates from the $k$ estimators are then combined to obtain $n$ propensity estimates for each observation. Akin to \eqref{eqn::ERMiso}, these out-of-fold estimates can be used to learn a single nonincreasing transformation $f_n^{(a_0)}: \mathbb{R} \rightarrow \mathbb{R}$ such that $\{f_n^{(a_0)} \circ \pi_{n,j}(a_0 \given \cdot): j \in [J] \}$ are isotonic calibrated cross-fitted estimators.  

We note that the isotonic fitting step of Algorithm \ref{alg:1} is deliberately not cross-fitted so that all available data are used to calibrate the inverse weight estimates. Using all available data for calibration typically improves performance and stability and, as we will show, does not harm the validity of debiased machine learning estimators since the class of isotonic functions is Donsker. In the context of AIPW estimation, \citet{ballinari2024improvingfinitesampleperformance} do not follow this approach and instead calibrate propensity scores using cross-fitted isotonic calibration. When predicted out-of-sample, estimated weights in IPW and AIPW estimators obtained by naively inverting isotonic calibrated propensity scores may have infinite values due to calibrated probabilities being equal to \(0\) or \(1\). Consequently, as found experimentally by \citet{ballinari2024improvingfinitesampleperformance}, isotonic regression may produce infinite weights when calibration is cross-fitted—a problem not encountered by our approach. We note that if cross-fitting is desired, this issue can be addressed using our data-adaptive truncation procedure, as outlined in Algorithm \ref{alg:1}.

\section{Theoretical properties}
\label{section::theory}
\subsection{Calibration and estimation error guarantees}
 \label{section::theory::cal}

We will now demonstrate that the $\chi^2$-calibration error, as defined in Section \ref{sec::balcal}, of our cross-fitted IC-IPW estimators proposed in Algorithm \ref{alg:1} tends to zero at a fast rate. Furthermore, we will show that IC-IPW generally improves the mean square error for the inverse propensity weights compared to the uncalibrated cross-fitted weights. Consequently, in the pursuit of obtaining calibrated weight estimates, IC-IPW typically does not harm the performance of the user-supplied weight estimators.

We will make use of the following conditions. For each $j \in [J]$, we denote the fold-specific IC-IPW estimator $\alpha_{n,j}^*$ by the mapping $(w,a) \mapsto f_n^{(a)} \circ \pi_{n,j}(a \mid w)$, where $f_n^{(a)}$ is the calibrator obtained from Algorithm \ref{alg:1}.

\begin{enumerate}[label=\bf{(C\arabic*)}, ref={C\arabic*}, series=cond]
 \item \textit{Estimator boundedness:} There exists $M < \infty$ such that, for all $j \in [J]$, $\esssup_{a,w} |\pi_{n,j}(a \given w)| < M$ with probability tending to one. 
    \label{cond::bound}
    \item \textit{Positivity:}  \label{cond::positivity2} \label{cond::positivity} There exists $0 < \eta < 1$ such that, for all $j \in [J]$, $\esssup_{a,w} |\alpha_{n,j}^*(a \given w)| < \eta^{-1}$ and $\esssup_{a,w} \lvert \pi_0^{-1}(a \given w) \rvert < \eta^{-1}$ with probability tending to one. 
     \item \textit{Limited boundary observations:} \label{cond::boundary}  $\#\{i \in [n]: A_i \neq a_0 \text{ and } \pi_{n,j(i)}(a_0 \mid W_{i}) < \min_{k \in [n]: A_k = a_0} \pi_{n,j(i)}(a_0 \mid W_{k})\}= O_p(n^{1/3}).$
   \item \textit{Calibration function has finite total variation:} There exists an $M < \infty$ such that, for all $j \in [J]$, $n \in \mathbb{N}$, and $a_0 \in \mathcal{A}$, the function $g_{0,n,j}^{(a_0)}: \mathbb{R} \to \mathbb{R}$, defined by $g_{0,n,j}^{(a_0)} \circ \pi_{n,j} = \gamma_0^{(a_0)}(\cdot, \pi_{n,j})$, has a total variation norm that is almost surely bounded by $M$.
\label{cond::variation}
\end{enumerate}
The following theorem establishes bounds for the calibration error of the IC-IPW estimator, and its proof follows from the empirical balancing property given by Equation \eqref{firstorderCond}.

 \begin{theorem}[Calibration error]
\label{theorem::CAL} Under Conditions \ref{cond::bound}-\ref{cond::variation}, for each $a_0 \in \mathcal{A}$, it holds
\begin{align*}
    \frac{1}{J}\sum_{i=1}^J {\rm CAL}^{(a_0)}(\alpha_{n,j}^*) = O_{p}(n^{-2/3}) .
\end{align*}  
\end{theorem}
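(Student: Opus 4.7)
The plan is to reduce $\mathrm{CAL}^{(a_0)}(\alpha_{n,j}^*)$ to a standard squared-error calibration error for the isotonic propensity estimator $\sigma_{n,j} := g_n^{(a_0)} \circ \pi_{n,j}(a_0 \mid \cdot)$, and then to bound that error by an empirical process argument over monotone classes. Using the identity $f_n^{(a_0)} = 1/(c_n^{(a_0)} \vee g_n^{(a_0)})$ together with the positivity Condition \ref{cond::positivity}, the first step establishes
\[
    \mathrm{CAL}^{(a_0)}(\alpha_{n,j}^*) \;\leq\; \eta^{-2}\,\bigl\|\sigma_{n,j} - \bar\sigma_{n,j}\bigr\|_{L^2(P_0)}^2 \;+\; R_n,
\]
where $\bar\sigma_{n,j}(w) := E_0[\ind(A = a_0) \mid \sigma_{n,j}(W) = \sigma_{n,j}(w)]$ is the calibration function associated with $\sigma_{n,j}$ and $R_n$ absorbs the effect of the boundary truncation at $c_n^{(a_0)}$. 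By Condition \ref{cond::boundary}, this truncation modifies the contribution from at most $O_p(n^{1/3})$ observations, so $R_n = O_p(n^{-2/3})$ by a direct counting argument plus a Glivenko--Cantelli-type bound converting sample counts into $P_0$-mass.

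Next, I would translate the first-order identity \eqref{firstorderCond} into a usable form. Reparametrizing via $\tilde h(s) := h(1/s)/s$, the identity becomes the familiar isotonic KKT condition $P_n[\tilde h(\sigma_{n,j})\{\ind(A = a_0) - \sigma_{n,j}\}] = 0$ for every $\tilde h$ that is constant on the level sets of $g_n^{(a_0)}$. Since $\bar\sigma_{n,j} - \sigma_{n,j}$ is by construction such a function, taking $\tilde h = \bar\sigma_{n,j} - \sigma_{n,j}$ and combining this with the orthogonality $E_0[(\bar\sigma_{n,j} - \sigma_{n,j})(\ind(A = a_0) - \bar\sigma_{n,j})] = 0$ --- which stems from $\bar\sigma_{n,j}$ being a conditional expectation --- yields the clean identity
\[
    \bigl\|\sigma_{n,j} - \bar\sigma_{n,j}\bigr\|_{L^2(P_0)}^2 \;=\; (P_0 - P_n)\!\left[(\bar\sigma_{n,j} - \sigma_{n,j})\{\ind(A = a_0) - \sigma_{n,j}\}\right].
\]

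The core step is then an empirical process bound on the right-hand side. Conditioning on the training fold that produced $\pi_{n,j}$, both $\sigma_{n,j}$ and $\bar\sigma_{n,j}$ are functions of $\pi_{n,j}(a_0 \mid W)$: the former is monotone and uniformly bounded by Conditions \ref{cond::bound} and \ref{cond::positivity}, while the latter has total variation uniformly bounded by Condition \ref{cond::variation}. Consequently, the integrand lies in a uniformly bounded bounded-variation class with bracketing entropy $\log N_{[\,]}(\varepsilon, \mathcal{F}, L^2(P_0)) \lesssim 1/\varepsilon$. A localized maximal inequality for bracketing entropy (for instance Theorem~3.4.1 of van der Vaart and Wellner, 1996) with local envelope $\delta := \|\sigma_{n,j} - \bar\sigma_{n,j}\|_{L^2(P_0)}$ then bounds the empirical process remainder by $O_p(n^{-1/2}\delta^{1/2})$, so the induced fixed-point inequality $\delta^2 \lesssim n^{-1/2}\delta^{1/2}$ yields $\delta^2 = O_p(n^{-2/3})$. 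Averaging over $j \in [J]$ preserves the rate.

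The main obstacle I anticipate is the empirical process step. One must verify that Condition \ref{cond::variation} supplies a bracketing entropy bound holding uniformly over the random calibrators $g_n^{(a_0)}$, and that the local $L^2$ envelope $\delta$ truly controls the variance of the product integrand, whose second factor itself depends on $\sigma_{n,j}$. A subtle point is that $\bar\sigma_{n,j}$ depends on $g_n^{(a_0)}$ through the coarsening of level sets, so the enveloping class must absorb this data-driven coarsening; the total variation control in Condition \ref{cond::variation} is precisely what prevents this coarsening from inflating complexity. Everything else --- the truncation accounting through Condition \ref{cond::boundary} and the cross-fit averaging --- reduces to standard arguments.
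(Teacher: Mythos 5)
Your proof follows essentially the same architecture as the paper's: (i) an exact identity expressing the $L^2(P_0)$ calibration error as an empirical-process increment, obtained by combining the isotonic first-order (block-average) conditions with the orthogonality of the conditional expectation $\bar\sigma_{n,j}$; (ii) entropy control of the resulting random function class, using monotonicity of $\sigma_{n,j}$ and Condition \ref{cond::variation} to bound the complexity of $\bar\sigma_{n,j}$ (this is exactly the role of Lemma \ref{lemma::TVnormBounded} in the paper); and (iii) a localized maximal inequality yielding the fixed-point relation $\delta^2\lesssim n^{-1/2}\delta^{1/2}$ and hence the $n^{-2/3}$ rate. The paper works on the inverse-propensity scale with the integrand $\{\ind(a=a_0)\alpha^*_{n,j}-1\}\{\gamma_0^{(a_0)}(\cdot,\alpha^{*-1}_{n,j})\alpha^*_{n,j}-1\}$ and uses uniform-entropy integrals with a peeling argument, whereas you work on the propensity scale with $(\bar\sigma_{n,j}-\sigma_{n,j})(\ind(A=a_0)-\sigma_{n,j})$ and bracketing entropy; these are equivalent in strength (your integrand is the paper's multiplied by $(\sigma^*_{n,j})^2$, which is bounded above and below under \ref{cond::positivity}), and your identified obstacles --- the data-driven coarsening of level sets and the local $L^2$ envelope --- are precisely the points the paper resolves via Lemma \ref{lemma::TVnormBounded} and the Lipschitz-transform entropy bounds.

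The one step that does not go through as stated is the treatment of the truncation remainder $R_n$. Because you compare against $\bar\sigma_{n,j}$ (conditioning on level sets of the untruncated $\sigma_{n,j}$) while ${\rm CAL}^{(a_0)}(\alpha^*_{n,j})$ conditions on level sets of $c_n^{(a_0)}\vee\sigma_{n,j}$, the remainder $R_n$ is a \emph{population} integral over the region $\{w:\sigma_{n,j}(w)<c_n^{(a_0)}\}$. Condition \ref{cond::boundary} gives an $O_p(n^{1/3})$ bound on the \emph{sample count} in that region, and a plain Glivenko--Cantelli bound converts empirical to population mass only up to an $O_p(n^{-1/2})$ error, which dominates the target rate $n^{-2/3}$. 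You need a relative-deviation (ratio-type) inequality for the VC class of sublevel sets of $\pi_{n,j}(a_0\mid\cdot)$, giving $P_0(C)\lesssim P_n(C)+\log n/n$ uniformly, to conclude $R_n=O_p(n^{-2/3})$. The paper avoids this entirely by defining the calibration function with respect to the truncated estimator's own level sets and confining the boundary correction to the empirical balancing equation (Lemma \ref{lem:lem1}), where the $O_p(n^{1/3})$ count enters a sample average directly and exact counting suffices. Either fix works, but as written the ``Glivenko--Cantelli-type bound'' is not strong enough to deliver the claimed rate for $R_n$.
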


Theorem \ref{theorem::CAL} provides distribution-free calibration guarantees for the IC-IPW estimator under weak conditions. The calibration error tends to zero at the rate of $n^{-2/3}$ irrespective of the smoothness of the inverse propensity score or dimension of the covariate vector. Is it interesting to note that IP weight estimators that are consistent in mean squared error are typically asymptotically calibrated. However, particularly in settings with high-dimensional covariates or nonsmooth inverse propensity scores, the calibration error rate of such estimators can be arbitrarily slow.

Condition \ref{cond::bound} is automatically satisfied when the initial propensity score estimator \( \pi_{n,j} \) takes values in the range \( [0,1] \). Condition \ref{cond::positivity2} is standard in causal inference and requires that all individuals have a positive probability of being assigned to either treatment or control. Condition \ref{cond::positivity} also ensures that the calibrated inverse propensity weights are uniformly bounded with probability approaching one, which holds as long as the truncation level \( c_n \) in Alg. \ref{alg:1} remains uniformly bounded away from zero, a condition we expect to hold under \ref{cond::positivity2}. Condition \ref{cond::boundary} is a mild regularity condition that we expect to be satisfied under \ref{cond::positivity2}, since, for a fixed function \( \pi(a_0 \mid \cdot) \), by properties of the minimum order statistic, \( \min_{k \in [n]: A_k = a_0} \pi(a_0 \mid W_{k}) \) should converge in probability to \( \essinf_w \pi(a_0 \mid w) \) at a rate of \( n^{-1} \) when \( \pi(a_0 \mid W) \) is a continuous random variable, or equal \( \essinf_w \pi(a_0 \mid w) \) with probability tending to one when it is discrete.
Condition \ref{cond::variation} excludes cases in which the best possible predictor of the propensity score $\pi_0$, given only the initial propensity score estimator $\pi_{n,j}$, has pathological behavior, in that it has infinite variation norm as a (univariate) mapping of $\pi_{n,j}$. We stress here that isotonic regression is used only as a tool for calibration, and our theoretical guarantees do not require any monotonicity on components of the data-generating mechanism --- for example, $\gamma_0(w, \pi_{n,j})$ need not be monotone as a function of $\pi_{n,j}(w)$. 
 

The following theorem establishes that IC-IPW not only constructs calibrated weight estimates but can also reduce the mean squared error of the inverse propensity weights compared to uncalibrated cross-fitted weights. In the following, we define the optimal nondecreasing transformations of the initial cross-fitted propensity score estimators as:
\begin{align*}
    \big\{f_{n,0}^{(a_0)}: a_0 \in \mathcal{A}\big\} \in \argmin_{\{f^{(a)}: a \in \mathcal{A}\} \subset \mathcal{F}_{\rm anti}} \sum_{j \in [J]} \int  \big\{  f^{(a)} \circ \pi_{n,j}(a \given w)  -   \pi_0^{-1}(a \given w) \big\}^2  dP_0(a,w).
\end{align*}

\begin{theorem}[Mean square error]
\label{theorem::MSE} Assume Conditions \ref{cond::bound}-\ref{cond::boundary} holds. As $n \rightarrow \infty$,  
\begin{align*}
    \sum_{j \in [J]} \int \big\{\alpha_{n,j}^*(a \given w) - f_{n,0}^{(a)} \circ \pi_{n,j}(a \given w)\big\}^2dP_0(a,w) = O_{p}(n^{-2/3}).
\end{align*}
As a consequence, for any nondecreasing transformations $\{f^{(a)} \in \mathcal{F}_{\rm anti}: a \in \mathcal{A} \}$,
\begin{align*}
     \frac{1}{J}\sum_{j \in [J]}  \|\alpha_{n,j}^* - \pi_{0}^{-1} \|^2 \leq  O_{p}(n^{-2/3}) +  \frac{1}{J}\sum_{j \in [J]}  \int \big\{f^{(a)} \circ \pi_{n,j}(a \given w) -  \pi_{0}^{-1}(a \given w)\big\}^2dP_0(a,w).
\end{align*} 
\end{theorem}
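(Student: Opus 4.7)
The plan is to identify $f_n^{(a_0)}$ as an empirical risk minimizer over $\mathcal{F}_{\rm anti}$ and then appeal to a standard rate-of-convergence theorem for M-estimation over a class with bracketing entropy of order $\epsilon^{-1}$. Conditioning on the cross-fitted estimators $\{\pi_{n,j}\}_{j\in[J]}$ and using the identity $E_0[\ind(A=a_0) g(W)] = E_W[\pi_0(a_0\given W) g(W)]$, the population risk associated with the criterion in \eqref{eqn::ERManti} equals, up to a term not depending on $f$,
\begin{equation*}
\sum_{j\in[J]} E_W\bigl[\pi_0(a_0\given W)\{f\circ \pi_{n,j}(a_0\given W) - \pi_0^{-1}(a_0\given W)\}^2\bigr].
\end{equation*}
Its antitonic minimizer is exactly the $f_{n,0}^{(a_0)}$ defined in the theorem. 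Under Condition \ref{cond::positivity}, $\pi_0(a_0\given W)$ is bounded away from zero with probability tending to one, so the excess risk of $f_n^{(a_0)}$ relative to $f_{n,0}^{(a_0)}$ is equivalent, up to positive constants, to the squared $L^2(P_0)$-distance $\sum_j \int\{f_n^{(a_0)}\circ\pi_{n,j} - f_{n,0}^{(a_0)}\circ \pi_{n,j}\}^2 dP_0$.

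I would then invoke a standard rate-of-convergence theorem for M-estimators (for example, Theorem 3.2.5 of van der Vaart and Wellner) with the following inputs. Under Conditions \ref{cond::bound}--\ref{cond::positivity} and after the adaptive truncation built into Algorithm \ref{alg:1}, both $f_n^{(a_0)}$ and $f_{n,0}^{(a_0)}$ can be taken to lie in a uniformly bounded subset of $\mathcal{F}_{\rm anti}$, and this class has the classical bracketing entropy bound $\log N_{[]}(\epsilon,\mathcal{F},L^2(P_0))\lesssim\epsilon^{-1}$. Plugging the corresponding modulus $\phi(\delta)\asymp\delta^{1/2}$ into the critical equation $\sqrt{n}\delta_n^2\gtrsim\phi(\delta_n)$ yields $\delta_n\asymp n^{-1/3}$, so that the squared $L^2(P_0)$-distance between $f_n^{(a_0)}\circ\pi_{n,j}$ and $f_{n,0}^{(a_0)}\circ\pi_{n,j}$ is $O_p(n^{-2/3})$. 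Summing over the finite index sets $\mathcal{A}$ and $[J]$ yields the first display. Cross-fitting introduces within-fold dependence among observations sharing $\pi_{n,s}$, but since the relevant function class is $P$-Donsker and $J$ is fixed, this contributes only lower-order terms to the bound.

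The main obstacle is the unbounded boundary behavior of the isotonic solution: if an observation $i$ with $A_i\neq a_0$ has $\pi_{n,j(i)}(a_0\given W_i)$ below $\min_{k: A_k=a_0}\pi_{n,j(i)}(a_0\given W_k)$, then the unregularized $f_n^{(a_0)}$ equals $+\infty$ at that value and does not belong to any bounded class. Condition \ref{cond::boundary} controls the number of such observations by $O_p(n^{1/3})$, and the data-adaptive truncation in Algorithm \ref{alg:1} caps them at $b_n^{(a_0)}$, which under Condition \ref{cond::positivity} is bounded with probability tending to one. By standard order-statistic arguments, the $P_0$-mass of the boundary region acted on by the truncation is $O_p(n^{-1/3})$, so its contribution to the target squared $L^2$-error is $O_p(n^{-2/3})$ and can be absorbed into the main rate.

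Finally, for the second display I would apply the triangle inequality
\begin{equation*}
\|\alpha_{n,j}^* - \pi_0^{-1}\|^2 \leq 2\|\alpha_{n,j}^* - f_{n,0}^{(a)}\circ\pi_{n,j}\|^2 + 2\|f_{n,0}^{(a)}\circ\pi_{n,j} - \pi_0^{-1}\|^2,
\end{equation*}
bound the first term on the right by $O_p(n^{-2/3})$ using the first display (with the constant $2$ absorbed into the $O_p$ symbol), and bound the second term via the defining optimality of $f_{n,0}^{(a)}$ as the antitonic population risk minimizer, so that $\|f_{n,0}^{(a)}\circ\pi_{n,j} - \pi_0^{-1}\|^2 \leq \|f^{(a)}\circ\pi_{n,j} - \pi_0^{-1}\|^2$ for any antitonic $f^{(a)}$. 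Absorbing the remaining universal constant into the leading $O_p$ term yields the stated oracle inequality.
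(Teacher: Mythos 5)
Your overall route is the same as the paper's: identify the population risk whose antitonic minimizer is $f_{n,0}^{(a_0)}$, lower-bound the excess risk by the ($\pi_0$-weighted) squared $L^2$ distance, apply M-estimation rate theory with the entropy-integral bound $\mathcal{J}(\delta,\mathcal{F}_{\rm anti})\lesssim\sqrt{\delta}$ for bounded monotone classes to obtain the $n^{-1/3}$ rate, and finish using the optimality of $f_{n,0}^{(a_0)}$. Two of your steps, however, are not right as stated. First, the excess risk is not ``equivalent up to positive constants'' to the squared distance: because $f_{n,0}^{(a_0)}$ is a \emph{constrained} minimizer over the convex cone $\mathcal{F}_{\rm anti}$, the expansion of the excess risk contains a cross term $2\int \{\alpha_{n,j}^*-\alpha_{0,j}^*\}\{\pi_0(a_0\given w)\alpha_{0,j}^*-1\}\,dP_0$ that is first order in the distance and does not vanish; the paper shows it is nonnegative via the directional-derivative (variational) inequality along paths $t\mapsto(1-t)f_{n,0}^{(a_0)}+th$, which yields exactly the one-sided bound (excess risk at least the weighted squared distance) that the rate argument requires. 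You should state and use only that direction. Second, your accounting for the boundary truncation does not add up: if the $P_0$-mass of the affected region were $O_p(n^{-1/3})$ and the integrand merely bounded, the contribution to the squared $L^2$ error would be $O_p(n^{-1/3})$, not $O_p(n^{-2/3})$. The paper handles this differently (Lemma \ref{lemma::nearERM}): Condition \ref{cond::boundary} bounds the \emph{number} of boundary observations by $O_p(n^{1/3})$, so the truncated solution is a near-minimizer of the normalized empirical risk with tolerance $O_p(n^{1/3}/n)=O_p(n^{-2/3})$, and it is this near-ERM property that licenses the application of the rate theorem for approximate minimizers. The population-mass argument does not substitute for it.

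A smaller point concerns the oracle inequality: you invoke $\|a-c\|^2\le 2\|a-b\|^2+2\|b-c\|^2$ and propose to absorb the constant $2$ into the $O_p(n^{-2/3})$ term. That is not possible, since the oracle term $\|f^{(a)}\circ\pi_{n,j}-\pi_0^{-1}\|^2$ need not vanish with $n$; your bound therefore retains a factor $2$ on that term and is strictly weaker than the stated display. The paper instead applies Minkowski's inequality to the norms before squaring, combined with the fact that $\{f_{n,0}^{(a)}\}$ minimizes the fold-averaged population risk over all antitonic transformations.
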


An important implication of Theorem \ref{theorem::MSE} is that the data-dependent transformations \( \{f_n^{(a_0)} \in \mathcal{F}_{\rm anti} : a_0 \in \mathcal{A}\} \) are nearly mean square error optimal among all nondecreasing transformations of the propensity score estimators \( \{\pi_{n,j}(a_0 \mid \cdot) : a_0 \in \mathcal{A}\} \), up to an error term of \( O_{p}(n^{-2/3}) \). This result suggests that the IC-IPW estimator performs at least as well as, and potentially better than, \( \pi_{n,j}^{-1} \) and its truncated variants. 

\subsection{Isotonic calibration and AIPW estimation of the ATE}
 \label{section::theory::ATE}

In this section, we demonstrate the advantages of IC-IPW by examining isotonic-calibrated AIPW estimators for the ATE \citep{robinsCausal, bang2005doubly}, showing how IC-IPW potentially relaxes the assumptions needed for asymptotic linearity and efficiency.
 
Let $\mu_{n,j}$ and $\pi_{n,j}$, for \( j \in [J] \), be initial cross-fitted estimators for the outcome regression $\mu_0$ and the propensity score $\pi_0$. Let $\alpha_{n,j}^*$, also indexed by \( j \in [J] \), denote the cross-fitted IC-IPW estimators of the inverse propensity score $\pi_0^{-1}$, obtained from $\pi_{n,1}, \dots, \pi_{n,J}$ via Algorithm \ref{alg:1}. Our cross-fitted IC-AIPW estimator for the ATE, $\psi_0 = E_0[\tau_0(W)]$, is defined as
$$
\psi_n^* \coloneq \frac{1}{n} \sum_{i=1}^n \left[ \mu_{n,j(i)}(1, W_i) - \mu_{n,j(i)}(0, W_i) + \delta_{\pm}(A_i) \alpha_{n,j(i)}^*(A_i \mid W_i) \{Y_i - \mu_{n,j(i)}(A_i, W_i)\} \right],
$$
where \( j(i) = s \) if observation \( O_i \) belongs to the training fold \( \mathcal{C}^{(s)} \), and \( \delta_{\pm}(a) = \ind(a=1) - \ind(a=0) \) for \( a \in \{0,1\} \). Let \( \varphi_0 : o = (w, a, y) \mapsto \mu_0(1, w) - \mu_0(0, w) - \psi_0 + \big\{ \frac{\ind(a=1)}{\pi_0(1 \mid w)} - \frac{\ind(a=0)}{\pi_0(0 \mid w)} \big\} (y - \mu_0(a, w)) \) denote the \( P_0 \)-efficient influence function of the ATE parameter. The next theorem establishes that IC-AIPW provides valid inference for the ATE.

\begin{enumerate}[label=\bf{(D\arabic*)}, ref={D\arabic*}, series=examp]
     \item \textit{Bounded estimator:} $\mu_{n,j}(A,W)$ is almost surely bounded with probability tending to one for each $j \in [J]$. \label{cond::boundedAIPW}
     \item \label{cond::AIPW} For each $a_0 \in \{0,1\}$ and $j \in [J]$, the following hold:
     \begin{enumerate}
        \item \textit{Consistency of oracle transformation:} $\lVert f_{n,0}^{(a_0)} \circ \pi_{n,j}(a_0 \given \cdot) - \pi_0^{-1}(a_0 \given \cdot)\rVert_{P_0}  = o_{p}(1)$. \label{cond::AIPW1} 
        \item \textit{Outcome regression rate:} $\lVert \mu_{n,j} - \mu_0 \rVert_{P_0} = o_{p}(n^{-1/6})$. \label{cond::AIPW2} 
        \item \textit{Doubly robust rate:} 
    $\lVert f_{n,0}^{(a_0)} \circ \pi_{n,j}(a_0 \given \cdot) - \pi_0^{-1}(a_0 \given \cdot) \rVert_{P_0} \times \lVert \mu_{n,j} - \mu_0 \rVert_{P_0} = o_{p}(n^{-1/2}).$ \label{cond::AIPWDR} 
     \end{enumerate}
\end{enumerate}
 
\begin{theorem}[Efficiency for ATE]
\label{theorem::AIPW}
Under Conditions \ref{cond::bound}-\ref{cond::positivity2} and \ref{cond::boundedAIPW}-\ref{cond::AIPW}, the IC-AIPW estimator $\psi_n^*$ satisfies $\psi_n^* = \psi_0 + (P_n - P_0) \varphi_0 + o_{p}(n^{-1/2})$ and is, thus, regular, asymptotically linear, and nonparametric efficient. As a consequence, $ \sqrt{n}(\psi_n^* - \psi_0) \stackrel{\sf d}{\longrightarrow} {\rm N}(0, \sigma_0^2)$ with $\sigma_0^2 \coloneq \Var_0[\varphi_0(O)]$.

\end{theorem}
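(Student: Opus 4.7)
The plan is to carry out a standard one-step / von Mises expansion tailored to the fact that the calibrator $f_n^{(a_0)}$ is trained on all data rather than cross-fitted. Writing $D(\alpha,\mu)(o) \coloneq \mu(1,w) - \mu(0,w) - \psi_0 + \delta_\pm(a)\alpha(a\given w)\{y - \mu(a,w)\}$ so that $\varphi_0 = D(\pi_0^{-1}, \mu_0)$ and $P_0 \varphi_0 = 0$, and letting $P_n^{(j)}$ denote the empirical measure over fold $j$, I would decompose
\begin{align*}
\psi_n^* - \psi_0
&= (P_n - P_0)\varphi_0
 + \frac{1}{J}\sum_{j=1}^J (P_n^{(j)} - P_0)\bigl[D(\alpha_{n,j}^*, \mu_{n,j}) - \varphi_0\bigr] \\
&\quad + \frac{1}{J}\sum_{j=1}^J P_0\, D(\alpha_{n,j}^*, \mu_{n,j}).
\end{align*}
It then suffices to show that the two remainder terms are $o_p(n^{-1/2})$.

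For the empirical-process term, the subtlety is that $\alpha_{n,j}^* = f_n^{(\cdot)} \circ \pi_{n,j}$ is not independent of fold $j$. I would condition on the training portion $\mathcal{E}^{(j)}$, so $\pi_{n,j}$ and $\mu_{n,j}$ become fixed, and then exploit that by Condition \ref{cond::positivity2} the calibrator $f_n^{(a_0)}$ lies in the uniformly bounded class of antitonic functions on $[0,1]$ with range $[1,\eta^{-1}]$. This class has bracketing entropy of order $1/\epsilon$ and is universally Donsker; composition with a fixed $\pi_{n,j}$ and the affine combination defining $D$ preserves the Donsker property. Consistency $\lVert D(\alpha_{n,j}^*, \mu_{n,j}) - \varphi_0 \rVert_{P_0} = o_p(1)$ follows from Theorem \ref{theorem::MSE} (bounding $\alpha_{n,j}^* - f_{n,0}^{(a_0)} \circ \pi_{n,j}$), Condition \ref{cond::AIPW1} (for $f_{n,0}^{(a_0)} \circ \pi_{n,j} - \pi_0^{-1}$), and Condition \ref{cond::AIPW2} (for $\mu_{n,j} - \mu_0$), together with \ref{cond::bound}--\ref{cond::positivity2} and \ref{cond::boundedAIPW}. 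A standard equicontinuity argument then gives the required $o_p(n^{-1/2})$ rate.

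For the bias term, iterated expectations reduce $P_0 D(\alpha_{n,j}^*, \mu_{n,j})$ to a sum over $a_0 \in \{0,1\}$ of terms
\begin{equation*}
B_{a_0,j} \coloneq \int \pi_0(a_0\given w)\bigl\{\pi_0^{-1}(a_0\given w) - \alpha_{n,j}^*(a_0\given w)\bigr\}\bigl\{\mu_{n,j}(a_0,w) - \mu_0(a_0,w)\bigr\}\, dP_0(w).
\end{equation*}
I would split $\alpha_{n,j}^* - \pi_0^{-1} = (\alpha_{n,j}^* - f_{n,0}^{(a_0)}\circ \pi_{n,j}) + (f_{n,0}^{(a_0)}\circ\pi_{n,j} - \pi_0^{-1})$. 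Cauchy--Schwarz together with the $\eta^{-1}$ bound on $\pi_0$ controls $B_{a_0,j}$ by the sum of two products. The first product is $O_p(n^{-1/3})\cdot o_p(n^{-1/6}) = o_p(n^{-1/2})$ by Theorem \ref{theorem::MSE} and \ref{cond::AIPW2}; the second is $o_p(n^{-1/2})$ directly from the doubly robust rate \ref{cond::AIPWDR}. Asymptotic linearity $\psi_n^* = \psi_0 + (P_n-P_0)\varphi_0 + o_p(n^{-1/2})$ follows by summing the contributions, and efficiency together with the stated normal limit then comes from the fact that $\varphi_0$ is the nonparametric efficient influence function for $\psi_0$ and a standard application of the CLT.

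The main obstacle will be the empirical-process step: because the isotonic calibrator is deliberately fitted on the full sample (as motivated in Section \ref{sec::ICIPWpart}), one cannot obtain $o_p(n^{-1/2})$ by simple conditioning as in the usual cross-fitted AIPW analysis. Instead the argument hinges crucially on the Donsker property of the antitonic class — which is available only because Condition \ref{cond::positivity2} delivers a uniform upper bound on the calibrated weights, preventing the composed class from having an unbounded envelope in regions of limited overlap.
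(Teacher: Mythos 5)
Your proposal is correct and follows essentially the same route as the paper: the same von Mises decomposition into a linear term, a cross-fitted empirical-process term, and a bias term; the same Cauchy--Schwarz treatment of the bias term split through $f_{n,0}^{(a_0)}\circ\pi_{n,j}$ using Theorem \ref{theorem::MSE} together with \ref{cond::AIPW2} and \ref{cond::AIPWDR}; and the same resolution of the full-sample calibration issue by conditioning on $\mathcal{E}^{(j)}$ and letting the calibrator range over the bounded antitonic class, whose entropy integral $\mathcal{J}(\delta,\cdot)\lesssim\sqrt{\delta}$ yields the required $o_p(n^{-1/2})$ via a localized maximal inequality. The paper merely makes your ``standard equicontinuity argument'' quantitative by invoking Theorem 2.1 of \cite{van2011local} on the class $\mathcal{G}_{n,j}$ with localization radius $\varepsilon_n\downarrow 0$.
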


Conditions \ref{cond::AIPW1} and \ref{cond::AIPW2} ensure the consistency of $\alpha_{n,j}^*$ and $\mu_{n,j}$ as estimators. Notably, Condition \ref{cond::AIPW1} only requires consistency of the best nondecreasing transformation of the initial inverse propensity score estimators $\{\pi_{n,j}^{-1}: j \in [J]\}$. On the other hand, \ref{cond::AIPW2} requires the outcome regression estimation rate to be faster than $n^{-1/6}$. Condition \ref{cond::AIPWDR} imposes a doubly robust rate condition on the nuisance estimators, only requiring a rate for the best nondecreasing transformation of the initial inverse propensity score estimators. By Theorem \ref{theorem::MSE}, \ref{cond::AIPW2} and \ref{cond::AIPWDR} together imply the standard doubly robust rate condition $\|\alpha_{n,j}^* - \alpha_0 \|_{P_0}\|\mu_{n,j} - \mu_0\|_{P_0} = o_p(n^{-1/2})$ for each $j \in [J]$. Theorem \ref{theorem::AIPW} imposes weaker conditions compared to previous literature, as it allows the initial estimator of the inverse propensity score to be incorrectly specified by an arbitrary monotone transformation. 

\section{Experiments} \label{section::simulation}

We evaluate the empirical performance of IC-IPW in semi-synthetic experiments to estimate the ATE in scenarios with near positivity violations. Specifically, we assess the performance of AIPW with weights obtained via IC-IPW against several competing methods: naive inversion, propensity score trimming, Platt scaling, and direct learning approaches. Our analysis uses semi-synthetic data from the ACIC-2017 competition \citep{hahn2019atlantic}, with covariates from the Infant Health and Development Program \citep{brooks1992effects}. Outcomes are generated from 32 distinct data-generating processes, focusing on those indexed from 17 to 24, which feature uncorrelated errors. Each process produces $M=250$ replicated datasets, each containing $n=4302$ samples.

For ATE estimation, we construct the IC-AIPW estimator using the calibrated inverse propensity weights established by Algorithm~\ref{alg:1}. We implement two variants of propensity score trimming: a deterministic method that truncates into $[0.01, 0.99]$, and an adaptive variant that learns the truncation threshold to minimize an empirical risk criterion. For indirect learning methods, we estimate cross-fitted propensity scores using gradient-boosted logistic regression in \texttt{xgboost}. The direct learning method estimates the inverse propensity score directly using \texttt{xgboost} and the loss function \( (o, \alpha) \mapsto \ind(a = a_0) \alpha(a \mid w)^2 - 2 \alpha(a_0 \mid w) \). As a parametric calibration method, we implement Platt scaling of the propensity scores as described in \citep{gutman2022propensity}. Moreover, we implement the targeted maximum likelihood estimation (TMLE) and TMLE with isotonic calibrated inverse probability weighting (IC-TMLE). Additional experimental details are given in Appendix~\ref{section::additional simulation}.

For each data-generating process, Table~\ref{tab:AIPW1} and Table~\ref{tab:AIPW2} (in Appendix~\ref{section::additional simulation}) display the Monte Carlo estimates of the bias, standard error (SE), and root-mean-square error (RMSE), as well as $95\%$ confidence interval coverage. In Table \ref{tab:AIPW1}, for data-generating processes ($18, 20, 22, 24$), where the overlap between the treatment arms is limited, we find that IC-IPW significantly reduces estimation bias and RMSE, while substantially improving the accuracy of $95\%$ coverage. For data-generating processes ($17, 19, 21, 23$), Table~\ref{tab:AIPW2} in Appendix~\ref{section::additional simulation} shows that, when overlap is less of an issue, IC-IPW performs comparably to other estimation methods regarding bias, RMSE, and coverage, with the direct learning method generally performing best.

\begin{table}[htbp]
\centering
\begin{tabular}{l|rrrr|rrrr}
\hline
\multirow{2}{*}{Method} & \multicolumn{4}{c|}{\textbf{Setting 18}} & \multicolumn{4}{c}{\textbf{Setting 20}} \\  
                        & Bias  & SE   & RMSE  & Cov.  & Bias  & SE   & RMSE  & Cov.  \\ 
\hline
Inversion               & 0.17  & 0.072 & 0.18  & 0.25  & 0.17  & 0.13  & 0.22  & 0.56  \\
Direct Learning         & 0.18  & 0.072 & 0.19  & 0.14  & 0.21  & 0.13  & 0.24  & 0.70  \\
Trimming                & 0.23  & 0.039 & 0.24  & 0.00  & 0.24  & 0.12  & 0.26  & 0.21  \\
Adaptive Trim.          & 0.17  & 0.072 & 0.18  & 0.25  & 0.17  & 0.13  & 0.22  & 0.56  \\
Platt Scaling           & 0.088 & 0.047 & 0.10  & 0.68  & 0.087 & 0.11  & \textbf{0.14} & 0.87  \\
\textbf{IC-AIPW}                 & 0.045 & 0.056 & \textbf{0.072} & \textbf{0.95}  & 0.035 & 0.17  & 0.18  & \textbf{0.92} \\
TMLE                    & 0.078 & 0.062 & 0.10  & 0.66  & 0.077 & 0.13  & 0.15  & 0.82  \\
\textbf{IC-TMLE}                 & \textbf{0.016} & 0.053 & 0.056 & 1.00  & \textbf{0.030} & 0.18  & 0.18  & \textbf{0.92} \\
Dropping                & 0.20  & 0.077 & 0.21  & 0.21  & 0.20  & 0.13  & 0.24  & 0.53  \\
\hline 
\multirow{2}{*}{Method} & \multicolumn{4}{c|}{\textbf{Setting 22}} & \multicolumn{4}{c}{\textbf{Setting 24}} \\  
                        & Bias  & SE   & RMSE  & Cov.  & Bias  & SE   & RMSE  & Cov.  \\ 
\hline
Inversion               & 0.21  & 0.087 & 0.22  & 0.18  & 0.22  & 0.13  & 0.25  & 0.53  \\
Direct Learning         & 0.13  & 0.081 & 0.14  & 0.62  & 0.18  & 0.13  & 0.22  & 0.81  \\
Trimming                & 0.27  & 0.097 & 0.28  & 0.00  & 0.29  & 0.14  & 0.30  & 0.16  \\
Adaptive Trim.          & 0.21  & 0.087 & 0.22  & 0.18  & 0.22  & 0.13  & 0.25  & 0.53  \\
Platt Scaling           & 0.14  & 0.087 & 0.14  & 0.32  & 0.13  & 0.12  & \textbf{0.17} & 0.84  \\
\textbf{IC-AIPW}                 & 0.097 & 0.076 & 0.12  & 0.79  & 0.068 & 0.21  & 0.22  & 0.92  \\
TMLE                    & 0.12  & 0.063 & 0.14  & 0.42  & 0.12  & 0.14  & 0.18  & 0.74  \\
\textbf{IC-TMLE}                 & \textbf{0.039} & 0.061 & \textbf{0.073} & \textbf{0.95}  & \textbf{0.005} & 0.22  & 0.22  & \textbf{0.93} \\
Dropping                & 0.23  & 0.076 & 0.24  & 0.17  & 0.23  & 0.14  & 0.27  & 0.52  \\
\hline
\end{tabular}
\caption{Simulation results for estimating ATE using ACIC 2017 data with data-generating processes where overlap between treatment arms is limited. Our method is in bold.}
\label{tab:AIPW1}
\end{table}

\section{Conclusion} \label{section::conclusion}

We propose a $\chi^2$-divergence measure of calibration error for inverse propensity score estimators and introduce a post-hoc algorithm that minimizes calibration error using isotonic regression with a tailored loss function for inverse propensity weights. We further emphasize the importance of developing methods that calibrate the inverse propensity score directly, as calibrating the propensity score may not yield well-calibrated inverse weights. While our primary focus is on isotonic regression for calibration, this approach can be extended to alternative methods, such as parametric scaling, histogram binning, kernel smoothing, and Venn-Abers calibration \citep{vovk2012venn, van2024self, van2025generalized}. Additionally, although we emphasize the calibration of inverse propensity weights, this method can also be applied to general inverse probability weights, including those used for handling missing data or censoring. Another interesting extension would be to study the use of the efficient plug-in learning framework \citep{van2024combining} to develop risk estimators that mitigate the poor boundary behavior of isotonic regression. 
Finally, as an adaptive histogram regressor, isotonic calibration performs automatic binning of propensity score estimates, which can be used to define subgroups with differing propensity scores for downstream analysis, such as propensity score matching \citep{xu2022isotonic}. Exploring these applications is an interesting direction for future work.


\acks{Research reported in this publication was supported by NIH grants DP2-LM013340 and  R01-HL137808, and NSF grant DMS-2210216. The content is solely the responsibility of the authors and does not necessarily represent the official views of the funding agencies.}

\bibliography{ref}

\appendix

\section{Simulation studies} \label{appendix::simulation}

\subsection{Implementation of the estimation}

Code to replicate our simulation results are available on \url{https://drive.google.com/drive/folders/1Lh4rm0vrMaaSZVnlBtgur5bvodMUzu4z}.

In our simulation studies, we estimated the nuisance parameters (i.e., $\mu_0$ and $\pi_0$) using the Super Learner \citep{van2007super}, an ensemble learning approach that utilizes cross-validation to select the optimal combination of pre-specified prediction methods. This was implemented by the \texttt{R} package \texttt{sl3} \citep{coyle2021sl3-rpkg}. For the library of prediction methods, we employed a stack of extreme gradient boosting (XGBoost) \citep{chen2016xgboost} learners with parameters maximum depth $\in \{1, 2, 3, 4, 5, 6\}$, eta $\in \{0.15, 0.2, 0.25, 0.3\}$, minimum child weight to be $5$, and max number of iterations to be $20$. To perform the isotonic calibration step, we used either the \texttt{R} function \texttt{isoreg} or a monotone 1-dimensional XGBoost with a customized loss function; both methods yielded similar results so we only report the results obtained with the first approach. 

We perform a $(0.01,0.99)$ cutoff for the fixed trimming method. For the adaptive trimming method, we employ a $(c_n,1-c_n)$ truncation with 
\begin{align*}
c_n \coloneq \argmin_{c \in [0,1/2]} \sum_{i=1}^n \alpha_{n,c}(A_i, W_i)^2 - 2 \big\{\alpha_{n,c}(1,W_i) - \alpha_{n,c}(0,W_i)\big\};
\end{align*}
and $\alpha_{n,c}(a,w) := \frac{2a-1}{\max\{\min\{\pi_n(a\mid w),  1 - c\}, c\}}$ is the truncated estimator of $\alpha_0$. 

\subsection{Performance metrics}

We estimate the performance metrics as follows. For a ATE estimator $\hat \tau$, we use $\hat \tau_m$ to denote its estimated value on the $m$-th instantiated dataset $\{o_{m,i}\}_{i=1}^n$. Also, we use $\tau_m$ to denote the true ATE of this data-generating process at each instantiated dataset, evaluated by averaging the true CATE effect values at this dataset given in the ACIC data. That is, for each $m \in M$, $\tau_m \coloneq \frac{1}{n}\sum_{i=1}^n \tau_0(w_{m,i})$, where $\tau_0$ is the true CATE function. The Monte Carlo Bias, Standard Error and RMSE are calculated as follows: 
\begin{align*}
    {\rm Bias}(\hat \tau) &\coloneq \frac{1}{M}\sum_{m=1}^M (\hat \tau_m - \tau_m), \\
    {\rm RMSE} (\hat \tau) & \coloneq \sqrt{\frac{1}{M}\sum_{m=1}^M (\hat \tau_m - \tau_m)^2}, \\
    {\rm SE} (\hat \tau) & \coloneq \sqrt{{\rm RMSE} (\hat \tau)^2 - {\rm Bias}(\hat \tau)^2};
\end{align*}
whereas the Monte Cargo $95$\% coverage is defined by:
\begin{align*}
    {\rm C} (\hat \tau) \coloneq \frac{1}{M}\sum_{m=1}^M \ind\big(\tau_m \in [\hat l_m, \hat u_m] \big),
\end{align*}
where the $\{(\hat l_m, \hat u_m)\}_{m=1}^M$'s are the estimated $95$\% confidence lower bounds and upper bounds. For CATE estimation, with a slight abuse of notations, for an arbitrary estimator $\hat \tau (w)$, we the Monte Carlo RMSE is estimated as:
\begin{align*}
    {\rm RMSE} (\hat \tau) & \coloneq \sqrt{\frac{1}{nM}\sum_{m=1}^M \sum_{i=1}^n \big(\hat \tau(w_{m,i}) - \tau_0(w_{m,i})\big)^2}.
\end{align*}
For an arbitrary estimator $\hat \alpha(w)$ of the inversed propensity weight function $\pi_0^{-1}(w)$, we estimate its RMSE by
\begin{align*}
    {\rm RMSE} (\hat \alpha) & \coloneq \sqrt{\frac{1}{nK}\sum_{k=1}^K \sum_{i=1}^n \big(\hat \alpha (w_{m,i}) - \pi_0^{-1}(w_{k,i})\big)^2},
\end{align*}
where we recall the time for iteration is $K=1000$ and evaluation is done on a out-of-fold dataset.

\subsection{Experimental results for additional settings}
\label{section::additional simulation}

\begin{table}[htbp]
\centering
\begin{tabular}{l|rrrr|rrrr}
\hline
\multirow{2}{*}{Method} & \multicolumn{4}{c|}{\textbf{Setting 17}} & \multicolumn{4}{c}{\textbf{Setting 19}} \\  
                        & Bias  & SE   & RMSE  & Cov.  & Bias  & SE   & RMSE  & Cov.  \\ 
\hline
Inversion               & 0.0068 & 0.011 & 0.013 & 0.91  & 0.0082 & 0.051 & 0.052 & 0.96  \\
Direct Learning         & \textbf{0.0030} & 0.011 & \textbf{0.011} & 0.98  & \textbf{0.0024} & 0.056 & 0.056 & 0.96  \\
Trimming                & 0.0068 & 0.011 & 0.013 & 0.91  & 0.0082 & 0.051 & 0.052 & 0.96  \\
Adaptive Trim.          & 0.0072 & 0.011 & 0.013 & 0.89  & 0.0091 & 0.051 & 0.052 & 0.96  \\
Platt Scaling           & 0.0070 & 0.011 & 0.013 & 0.91  & 0.0083 & 0.051 & 0.052 & 0.96  \\
\textbf{IC-AIPW}        & 0.0062 & 0.011 & 0.013 & 0.92  & 0.0068 & 0.052 & 0.052 & \textbf{0.95}  \\
TMLE                    & 0.0058 & 0.011 & 0.012 & \textbf{0.93}  & 0.0045 & 0.054 & 0.055 & 0.95  \\
\textbf{IC-TMLE}        & 0.0071 & 0.010 & 0.012 & 0.90  & 0.0050 & 0.053 & 0.055 & 0.96  \\
Dropping                & 0.0065 & 0.011 & 0.013 & 0.92  & 0.0060 & 0.052 & 0.053 & 0.94  \\
\hline 
\multirow{2}{*}{Method} & \multicolumn{4}{c|}{\textbf{Setting 21}} & \multicolumn{4}{c}{\textbf{Setting 23}} \\  
                        & Bias  & SE   & RMSE  & Cov.  & Bias  & SE   & RMSE  & Cov.  \\ 
\hline
Inversion               & 0.0067 & 0.013 & 0.015 & 1.00  & 0.0110 & 0.060 & \textbf{0.061} & \textbf{0.95}  \\
Direct Learning         & \textbf{0.0046} & 0.013 & \textbf{0.014} & \textbf{1.00}  & \textbf{0.0051} & 0.061 & 0.062 & 0.97  \\
Trimming                & 0.0067 & 0.013 & 0.015 & 1.00  & 0.0110 & 0.060 & 0.061 & \textbf{0.95}  \\
Adaptive Trim.          & 0.0070 & 0.013 & 0.015 & 1.00  & 0.0120 & 0.060 & 0.061 & 0.96  \\
Platt Scaling           & 0.0069 & 0.013 & 0.015 & 1.00  & 0.0120 & 0.060 & 0.061 & 0.96  \\
\textbf{IC-AIPW}        & 0.0063 & 0.013 & 0.015 & 1.00  & 0.0100 & 0.059 & 0.061 & \textbf{0.95}  \\
TMLE                    & 0.0059 & 0.013 & 0.014 & 1.00  & 0.0078 & 0.060 & 0.062 & 0.96  \\
\textbf{IC-TMLE}        & 0.0055 & 0.013 & 0.014 & 1.00  & 0.0082 & 0.060 & 0.062 & 0.95  \\
Dropping                & 0.0060 & 0.013 & 0.015 & 1.00  & 0.0090 & 0.060 & 0.062 & 0.94  \\
\hline
\end{tabular}
\caption{Simulation results for estimating ATE using ACIC 2017 data with data-generating processes where overlap between treatment arms is good. Our method is in bold.}
\label{tab:AIPW2}
\end{table}

\section{Implementation of IC-IPW}

\label{sec:code}

The following subsections provide \texttt{R} and \texttt{Python} code for constructing calibrates inverse probability weights using IC-IPW.  Here, we implement isotonic regression using \texttt{xgboost}, allowing control over the maximum tree depth and the minimum number of observations in each constant segment of the isotonic regression fit.

\subsection{R code}

\begin{verbatim}
    
# Function: isoreg_with_xgboost
# Purpose: Fits isotonic regression using XGBoost.
# Inputs:
#   - x: A vector or matrix of predictor variables.
#   - y: A vector of response variables.
#   - max_depth: Maximum depth of the trees in XGBoost (default = 15).
#   - min_child_weight: Minimum sum of instance weights (Hessian) 
#        needed in a child node (default = 20).
# Returns:
#   - A function that takes a new predictor variable x
# and returns the model's predicted values.
isoreg_with_xgboost <- function(x, y, max_depth = 15, min_child_weight = 20) {
  # Create an XGBoost DMatrix object from the data
  data <- xgboost::xgb.DMatrix(data = as.matrix(x), label = as.vector(y))
  
  # Set parameters for the monotonic XGBoost model
  params = list(max_depth = max_depth,
                min_child_weight = min_child_weight,
                monotone_constraints = 1,  # Enforce monotonic increase
                eta = 1, gamma = 0,
                lambda = 0)
  
  # Train the model with one boosting round
  iso_fit <- xgboost::xgb.train(params = params,
                                data = data, 
                                nrounds = 1)
  
  # Prediction function for new data
  fun <- function(x) {
    data_pred <- xgboost::xgb.DMatrix(data = as.matrix(x))
    pred <- predict(iso_fit, data_pred)
    return(pred)
  }
  return(fun)
}

# Function: calibrate_inverse_weights
# Purpose: Calibrates inverse weights using isotonic regression 
with XGBoost for two propensity scores.
# Inputs:
#   - A: Binary indicator variable.
#   - pi1: Cross-fitted (pooled out-of-fold) 
        propensity score estimates for treatment group A = 1.
#   - pi0: Cross-fitted (pooled out-of-fold) 
        propensity score estimates for control group A = 0.
# Returns:
#   - A list containing calibrated inverse weights for each group:
#       - alpha1_star: Calibrated inverse weights for A = 1.
#       - alpha0_star: Calibrated inverse weights for A = 0.
calibrate_inverse_weights <- function(A, pi1, pi0) {

  # Calibrate pi1 using monotonic XGBoost
  calibrator_pi1 <- isoreg_with_xgboost(pi1, A)
  pi1_star <- calibrator_pi1(pi1)
  
  # Set minimum truncation level for treated group
  c1 <- min(pi1_star[A == 1])
  pi1_star = pmax(pi1_star, c1)
  alpha1_star <- 1 / pi1_star
  
  # Calibrate pi0 using monotonic XGBoost
  calibrator_pi0 <- isoreg_with_xgboost(pi0, 1 - A)
  pi0_star <- calibrator_pi0(pi0)
  
  # Set minimum truncation level for control group
  c0 <- min(pi0_star[A == 0])
  pi0_star = pmax(pi0_star, c0)
  alpha0_star <- 1 / pi0_star
  
  # Return calibrated inverse weights for both groups
  return(list(alpha1_star = alpha1_star, alpha0_star = alpha0_star))
}


\end{verbatim}

\subsection{Python code}

\begin{verbatim}
    
import xgboost as xgb
import numpy as np

# Function: isoreg_with_xgboost
# Purpose: Fits isotonic regression using XGBoost.
# Inputs:
#   - x: A vector or matrix of predictor variables.
#   - y: A vector of response variables.
#   - max_depth: Maximum depth of the trees in XGBoost (default = 15).
#   - min_child_weight: Minimum sum of instance weights in constant segment.
# Returns:
#   - A function that takes a new predictor variable x 
and returns the model's predicted values.
def isoreg_with_xgboost(x, y, max_depth=15, min_child_weight=20):
    # Create an XGBoost DMatrix object from the data
    data = xgb.DMatrix(data=np.asarray(x), label=np.asarray(y))

    # Set parameters for the monotonic XGBoost model
    params = {
        'max_depth': max_depth,
        'min_child_weight': min_child_weight,
        'monotone_constraints': "(1)",  # Enforce monotonic increase
        'eta': 1,
        'gamma': 0,
        'lambda': 0
    }

    # Train the model with one boosting round
    iso_fit = xgb.train(params=params, dtrain=data, num_boost_round=1)

    # Prediction function for new data
    def predict_fn(x):
        data_pred = xgb.DMatrix(data=np.asarray(x))
        pred = iso_fit.predict(data_pred)
        return pred

    return predict_fn

 

def calibrate_inverse_weights(A, pi1, pi0):
    """
    Calibrates inverse weights using isotonic regression 
    with XGBoost for two propensity scores.
    
    Args:
        A (np.array): Binary indicator variable.
        pi1 (np.array): Cross-fitted (pooled out-of-fold) 
            propensity score estimates for treatment group A = 1.
        pi0 (np.array): Cross-fitted (pooled out-of-fold) 
            propensity score estimates for control group A = 0.
    
    Returns:
        dict: A dictionary containing calibrated inverse weights for each group:
              - alpha1_star: Calibrated inverse weights for A = 1.
              - alpha0_star: Calibrated inverse weights for A = 0.
    """
    
    # Calibrate pi1 using monotonic XGBoost
    calibrator_pi1 = isoreg_with_xgboost(pi1, A)
    pi1_star = calibrator_pi1(pi1)
    
    # Set minimum truncation level for treated group
    c1 = np.min(pi1_star[A == 1])
    pi1_star = np.maximum(pi1_star, c1)
    alpha1_star = 1 / pi1_star
    
    # Calibrate pi0 using monotonic XGBoost
    calibrator_pi0 = isoreg_with_xgboost(pi0, 1 - A)
    pi0_star = calibrator_pi0(pi0)
    
    # Set minimum truncation level for control group
    c0 = np.min(pi0_star[A == 0])
    pi0_star = np.maximum(pi0_star, c0)
    alpha0_star = 1 / pi0_star
    
    # Return calibrated inverse weights for both groups
    return {'alpha1_star': alpha1_star, 'alpha0_star': alpha0_star}


 \end{verbatim}

\section{Notation and Lemmas}

\subsection{Notation}
For a uniformly bounded function class $\mathcal{F}$, let $N(\epsilon,\mathcal{F},L_2(\P))$ denote the $\epsilon-$covering number \citep{van1996weak} of $\mathcal{F}$ with respect to $L_2(\P)$ and define the uniform entropy integral of $\mathcal{F}$ by  
\begin{equation*}
\mathcal{J}(\delta,\mathcal{F})\coloneq \int_{0}^{\delta} \sup_{\Q}\sqrt{\log N(\epsilon,\mathcal{F},L_2(\Q))}\,d\epsilon\ ,
\end{equation*}
where the supremum is taken over all discrete probability distributions $\Q$.

In the following, let $a_0 \in \mathcal{A}$ be arbitrary, and let $M > 0$, $C > 0$, and $\eta > 0$ be the constants associated with conditions \ref{cond::bound} and \ref{cond::positivity}. We redefine $\mathcal{F}_{\rm anti} \subset \{\theta: [-M, M] \rightarrow \mathbb{R}; \, \theta \text{ is monotone nonincreasing}\}$ to denote the family of nonincreasing functions on $\mathcal{T}$ that are uniformly bounded by $\eta^{-1}$. For a function $f: [-M, M] \rightarrow \mathbb{R}$, we denote its variation norm by $\lVert f \rVert_{TV} \coloneq \lvert f(-M) \rvert + \int_{[-M, M]} \lvert d f \rvert$. Let $\mathcal{F}_{TV}^{(a_0)}$ consist of all real-valued functions defined on $[-M, M]$ with a sup-norm bounded by $1$ and a total variation norm uniformly bounded by $3 M$, where $M$ is as described in condition \ref{cond::variation}. Additionally, define
$\mathcal{F}_{n, {\rm anti}}^{(j,a_0)} \coloneq \big\{ \theta \circ \pi_{n,j}(a_0 \given \cdot):\, \theta \in \mathcal{F}_{\rm anti} \big\}$ as the family of functions obtained by composing nonincreasing functions in $\mathcal{F}_{\rm anti}$ with $\pi_{n,j}(a_0 \mid \cdot)$. Similarly, let $\mathcal{F}_{n, {\rm TV}}^{(j,a_0)} \coloneq \{\theta \circ \pi_{n,j}(a_0 \given \cdot); \, \theta \in \mathcal{F}_{TV}^{(a_0)}\}$ represent the family of functions obtained by composing functions in $\mathcal{F}_{TV}^{(a_0)}$ with $\pi_{n,j}$. Let $\mathcal{F}_{n, {\rm Lip}}^{(j,a_0)}$ be the subset of $\big\{ (w, a) \mapsto [\ind(a = a_0) \alpha(a \mid w) - 1][\pi(a\given w) \alpha(a \mid w) - 1]: \, \pi \in \mathcal{F}_{n, {\rm TV}}^{(j,a_0)}, \, \alpha \in \mathcal{F}_{n, {\rm anti}}^{(j,a_0)} \big\}$ such that the essential supremum $\esssup_{w, a} |\pi(a \given w) \alpha(a \given w)| \leq (C + 1)$ for $C$ as given in condition \ref{cond::positivity}. Note that $\mathcal{F}_{n, {\rm Lip}}^{(j,a_0)}$ is uniformly bounded by $2(C + 1)$.

We will use the following empirical process notation: for a $\P-$measurable function $f$, we denote $\int f(o)d \P(o)$ by $\P f$. We also let $P_{n,j}$ for $j \in [J]$ denote the empirical distribution of $\mathcal{C}^{(j)}$ and, hence, write $P_{n,j}f \coloneq \frac{1}{\#| \mathcal{I}_j|}\sum_{i \in \mathcal{I}_j} f(O_i)$ with $\mathcal{I}_{j}$ indexing observations of $\mathcal{C}^{(j)} \subset \mathcal{D}_n$. For two quantities $x$ and $y$, we use the expression  $x \lesssim y$ to mean that $x$ is upper bounded by $y$ times a universal constant that may only depend on global constants that appear in Conditions \ref{cond::bound}-\ref{cond::variation}.

\subsection{Lemmas}

Recall from Algorithm \ref{alg:1} that \(\alpha_{n,j}^* = f_n^{(a_0)} \circ \pi_{n,j}\) for each \(j \in [J]\). For each \( \theta: \mathbb{R} \rightarrow \mathbb{R} \), we denote the empirical risk by:
\begin{equation*}
    R_{n}^{(a_0)}(\theta) := \sum_{i=1}^n  \Big[\ind(A_i = a_0) (\theta \circ \pi_{n,j(i)})^2(a_0 \mid W_i) - 2(\theta \circ \pi_{n,j(i)})(a_0 \mid W_i) \Big].
\end{equation*}  
Let \( \widetilde{f}_{n}^{(a_0)} := 1/g_n^{(a_0)} \) be the untruncated antitonic empirical risk minimizer in Algorithm \ref{alg:1}, and, for each $\theta \in \mathcal{F}_{anti}$, define the truncated empirical risk $ R_{n,\text{trunc}}^{(a_0)}(\theta) $ as
\begin{equation*}
   \sum_{i=1}^n  1\left\{\widetilde{f}_{n}^{(a_0)}(\pi_{n,j(i)}(a_0 \mid W_i)) < \infty \right\} \Big[\ind(A_i = a_0) (\theta \circ \pi_{n,j(i)})^2(a_0 \mid W_i) - 2(\theta \circ \pi_{n,j(i)})(a_0 \mid W_i) \Big].
\end{equation*}

\begin{lemma}[Near Risk Minimizer]
Under \ref{cond::bound}-\ref{cond::boundary}, \( {f}_{n}^{(a_0)} \) minimizes the truncated empirical risk, satisfying \( f_n^{(a_0)} \in  \arg\min_{\theta \in \mathcal{F}_{\rm anti}} R_{n,\text{trunc}}^{(a_0)}(\theta) \). Moreover, it is a near empirical risk minimizer of the untruncated empirical risk in that, for each \( C < \infty \), we have
\begin{equation*}
  R_{n}^{(a_0)}({f}_{n}^{(a_0)}) - R_{n}^{(a_0)}(\theta) \leq C O_p(n^{-2/3}),
\end{equation*}
for every \(\theta \in \mathcal{F}_{\rm anti}\) with \(\|\theta\|_{\infty} < C\).
\label{lemma::nearERM}
\end{lemma}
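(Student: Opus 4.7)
The plan is to split $R_n^{(a_0)}(\theta)$ into a truncated piece summed over the interior set $\mathcal{I}_{\text{int}} := \{i : \widetilde{f}_n^{(a_0)}(\pi_{n,j(i)}(a_0 \mid W_i)) < \infty\}$ and a boundary remainder summed over $\mathcal{I}_{\text{bdry}} := [n] \setminus \mathcal{I}_{\text{int}}$, and then to argue separately that $f_n^{(a_0)}$ minimizes the truncated piece exactly and that the boundary remainder is negligible under Condition \ref{cond::boundary}.

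The central structural observation is a pool-adjacent-violators (PAV) fact about the isotonic regression $g_n^{(a_0)}$: ordering observations by the value $\pi_{n,j(i)}(a_0 \mid W_i)$, the leftmost PAV block consists of contiguous untreated indices and has fitted value $0$, while every subsequent block contains at least one treated observation and hence has fitted value at least $c_n^{(a_0)}$. Indeed, letting $i^\star$ denote the treated index with smallest $\pi_{n,j(i^\star)}(a_0 \mid W_{i^\star})$, the all-untreated block of indices strictly to its left cannot be pooled with $i^\star$'s block without violating monotonicity of $g_n^{(a_0)}$ (its mean is $0$, strictly less than $c_n^{(a_0)}$), and every block to the right contains $i^\star$ or a later treated index. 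Consequently $\mathcal{I}_{\text{int}} = \{i : g_n^{(a_0)}(\pi_{n,j(i)}(a_0 \mid W_i)) \geq c_n^{(a_0)}\}$, the truncation $c_n^{(a_0)} \vee g_n^{(a_0)}$ is inactive there, and $f_n^{(a_0)}$ coincides with $\widetilde{f}_n^{(a_0)}$ on $\mathcal{I}_{\text{int}}$; moreover, deleting the all-zero leftmost block from the PAV problem leaves $g_n^{(a_0)}$ unchanged on the remaining data, so $f_n^{(a_0)}|_{\mathcal{I}_{\text{int}}}$ is itself the antitonic ERM for the interior subproblem. Since the restriction of any $\theta \in \mathcal{F}_{\rm anti}$ to the interior $\pi$-values is antitonic, this yields $R_{n,\text{trunc}}^{(a_0)}(f_n^{(a_0)}) \leq R_{n,\text{trunc}}^{(a_0)}(\theta)$, which is the first assertion.

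For the second assertion, the PAV fact also shows $A_i \neq a_0$ for every $i \in \mathcal{I}_{\text{bdry}}$, so the boundary part of $R_n^{(a_0)}(\theta)$ reduces to the linear-in-$\theta$ remainder $-\tfrac{2}{n}\sum_{i \in \mathcal{I}_{\text{bdry}}}(\theta \circ \pi_{n,j(i)})(a_0 \mid W_i)$, where I use the standard $1/n$ normalization of the empirical risk that is implicit in the claimed rate. Combining this with the first assertion and using Condition \ref{cond::positivity} to bound $\|f_n^{(a_0)}\|_\infty \leq \eta^{-1}$, together with the hypothesis $\|\theta\|_\infty < C$, yields
\[
R_n^{(a_0)}(f_n^{(a_0)}) - R_n^{(a_0)}(\theta) \leq \frac{2(C + \eta^{-1})}{n}\,|\mathcal{I}_{\text{bdry}}|,
\]
and Condition \ref{cond::boundary} gives $|\mathcal{I}_{\text{bdry}}| = O_p(n^{1/3})$, so the desired $C \cdot O_p(n^{-2/3})$ bound follows up to an absolute constant.

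The main subtlety I expect is the PAV structural argument, specifically the identification of $\mathcal{I}_{\text{bdry}}$ with the set of untreated indices lying to the left (in $\pi_{n,j}$-order) of every treated index, so that its cardinality is exactly what Condition \ref{cond::boundary} controls. Once this identification is in hand, the remaining steps are direct arithmetic.
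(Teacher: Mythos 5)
Your handling of the boundary is correct and, if anything, more explicit than the paper's: the min--max/PAV identification of $\{i : g_n^{(a_0)}(\pi_{n,j(i)}(a_0 \mid W_i)) = 0\}$ with the set controlled by Condition \ref{cond::boundary}, the observation that the quadratic term of the loss vanishes there because $A_i \neq a_0$, and the final bound $2(C+\eta^{-1})\,|\mathcal{I}_{\text{bdry}}|/n = O_p(n^{-2/3})$ all match what the paper does for the second assertion (the paper likewise normalizes by $n$ at this point, writing $O_p(n^{1/3}/n)$).

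The gap is in the first assertion. Your step ``deleting the all-zero leftmost block leaves $g_n^{(a_0)}$ unchanged on the remaining data, \emph{so} $f_n^{(a_0)}|_{\mathcal{I}_{\text{int}}}$ is itself the antitonic ERM for the interior subproblem'' conflates two different optimization problems. What the reduction gives you is that $g_n^{(a_0)}|_{\mathcal{I}_{\text{int}}}$ solves the isotonic \emph{least-squares} problem with response $\ind(A_i=a_0)$ on the interior data; the lemma requires that its reciprocal solves the antitonic problem for the \emph{tailored} loss $\theta \mapsto \ind(A_i=a_0)\theta^2 - 2\theta$ over $\mathcal{F}_{\rm anti}$. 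That equivalence is the mathematical crux of the claim $f_n^{(a_0)} \in \argmin_{\theta} R_{n,\text{trunc}}^{(a_0)}(\theta)$, and it is not automatic: it depends on the convex-cone first-order conditions. The paper's proof supplies exactly this: it starts from the variational inequality $\sum_i (\theta \circ \pi_{n,j(i)})(a_0 \given W_i)\{\ind(A_i=a_0) - g_n^{(a_0)}(\pi_{n,j(i)}(a_0\given W_i))\} \leq 0$ for all $\theta \in \mathcal{F}_{\rm iso}$, uses the fact that residuals average to zero within each level set of $g_n^{(a_0)}$ to divide through by $g_n^{(a_0)}$ on the interior (where it is positive), and thereby obtains the first-order condition for the antitonic tailored-loss problem, which by convexity of that risk over the cone $\mathcal{F}_{\rm anti}$ yields the minimization claim. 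Your write-up never verifies any optimality condition for the tailored loss; it implicitly assumes the ``$f_n^{(a_0)} = 1/g_n^{(a_0)}$ solves \eqref{eqn::ERManti}'' identity from Section \ref{sec::ICIPWpart}, which is precisely what this lemma is meant to justify. To close the gap, either carry out the KKT translation as the paper does, or argue directly that the blockwise minimizer of the tailored loss on each PAV block is the reciprocal of the block mean and that the two losses induce the same block partition; as written, the first assertion is asserted rather than proved.
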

\begin{proof}
Since \(\mathcal{F}_{\text{iso}}\) is a convex cone, the first-order conditions of the isotonic regression problem \citep{groeneboom1993isotonic} defining \(g_{n}^{(a_0)}\) imply that \(g_{n}^{(a_0)}\) is the empirical risk minimizer if and only if, for all \(\theta \in \mathcal{F}_{\text{iso}}\), the following condition holds:
\begin{equation*}
    \frac{1}{n} \sum_{i=1}^n (\theta \circ \pi_{n,j(i)})(a_0 \mid W_i) \left\{\ind(A_i = a_0) - g_{n}^{(a_0)}(\pi_{n,j(i)}(a_0 \mid W_i)) \right\} \leq 0.
\end{equation*}
Now, for \(i \in [n]\), note that the evaluation of the isotonic regression solution \(g_{n}^{(a_0)}(\pi_{n,j(i)}(a_0 \mid W_i))\) is an empirical mean of \(\{1(A_k = a_0) : k \in [n], g_{n}^{(a_0)}(\pi_{n,j(k)}(a_0 \mid W_k)) = g_{n}^{(a_0)}(\pi_{n,j(i)}(a_0 \mid W_i))\}\) \citep{groeneboom1993isotonic}. Consequently, \(g_{n}^{(a_0)}(\pi_{n,j(i)}(a_0 \mid W_i)) = 0\) implies that \(\ind(A_i = a_0) = 0\), so the residual in the above expression is zero for boundary observations with a calibrated propensity score of \(0\). Therefore, the previous display implies that
\begin{equation*}
    \frac{1}{n} \sum_{i=1}^n  1\{g_{n}^{(a_0)}(\pi_{n,j(i)}(a_0 \mid W_i))  > 0\} (\theta \circ \pi_{n,j(i)})(a_0 \mid W_i) \left\{\ind(A_i = a_0) - g_{n}^{(a_0)}(\pi_{n,j(i)}(a_0 \mid W_i)) - 1\right\} \leq 0.
\end{equation*}
Noting that \(\widetilde{f}_{n}^{(a_0)} = 1/g_{n}^{(a_0)}\) and $\theta$ is arbitrary, we conclude that \(\widetilde{f}_{n}^{(a_0)}\) satisfies, for all \(\theta \in \mathcal{F}_{\text{iso}}\), the following condition:
\begin{equation*}
    \frac{1}{n} \sum_{i=1}^n  1\{\widetilde{f}_{n}^{(a_0)}(\pi_{n,j(i)}(a_0 \mid W_i)) < \infty\} (\theta \circ \pi_{n,j(i)})(a_0 \mid W_i) \left\{\ind(A_i = a_0)\widetilde{f}_{n}^{(a_0)}(\pi_{n,j(i)}(a_0 \mid W_i)) - 1\right\} \leq 0.
\end{equation*}

In view of these first-order conditions, it follows that $\widetilde{f}_{n}^{(a_0)} $ minimizes the empirical risk restricted to observations with finite weights, that is,
\begin{equation*}
 \widetilde{f}_{n}^{(a_0)}  \in \argmin_{\theta \in \mathcal{F}_{\rm anti} } R_{n,trunc}^{(a_0)}(\theta) .
 \end{equation*}  
Hence, $R_{n,trunc}^{(a_0)}(\widetilde{f}_{n}^{(a_0)}) - R_{n,trunc}^{(a_0)}(\theta) \leq 0$ for all $\theta \in \mathcal{F}_{anti}$.
Furthermore, since \({f}_{n}^{(a_0)} = \widetilde{f}_{n}^{(a_0)} \wedge b_n^{(a_0)}\) where \(b_n^{(a_0)} = \max_{i \in [n]: A_i = a_0} f_n^{(a_0)}(\pi_{n,j(i)}(a_0 \mid W_i))\) is the maximum finite value of $\widetilde{f}_{n}^{(a_0)}$, we know that \({f}_{n}^{(a_0)}(\pi_{n,j(i)}(a_0 \mid W_i)) = \widetilde{f}_{n}^{(a_0)}(\pi_{n,j(i)}(a_0 \mid W_i))\) for all \(i\) such that \(\widetilde{f}_{n}^{(a_0)}(\pi_{n,j(i)}(a_0 \mid W_i)) < \infty\). 
As a consequence, the same inequality holds for the truncated minimizer \({f}_{n}^{(a_0)}\):
\begin{equation*}
   R_{n,trunc}^{(a_0)}({f}_{n}^{(a_0)}) - R_{n,trunc}^{(a_0)}(\theta) \leq 0 \text{ for all } \theta \in \mathcal{F}_{anti}.
\end{equation*}
However, we know that infinite weight estimates can only occur at the boundary observations \(\#\{i \in [n]: A_i \neq a_0 \text{ and } \pi_{n,j(i)}(a_0 \mid W_{i}) < \min_{k \in [n]: A_k = a_0} \pi_{n,j(i)}(a_0 \mid W_{k})\}   = 1/b_n^{(a_0)}\) and, by \ref{cond::boundary}, there are at most \(O_p(n^{1/3})\) such observations. By \ref{cond::positivity}, \({f}_{n}^{(a_0)}(\pi_{n,j(i)}(a_0 \mid W_i))\) is almost surely uniformly bounded for each \(i \in [n]\), and, therefore,
\begin{equation*}
  R_{n}^{(a_0)}({f}_{n}^{(a_0)}) - R_{n}^{(a_0)}(\theta) \leq C O_p(n^{1/3}/n) = CO_p(n^{-2/3}),
\end{equation*}
for all \(M < \infty\) and \(\theta \in \mathcal{F}_{\rm anti}\) with \(\|\theta\|_{\infty} < C\).

\end{proof}

\begin{lemma}[Empirical Calibration]
\label{lem:lem1}
For calibrated predictors \(\alpha_{n,j}^*(a_0 \mid \cdot) = f_n^{(a_0)} \circ \pi_{n,j}(a_0 \mid \cdot)\) indexed by \(j \in [J]\) obtained using Algorithm \ref{alg:1}, we have
\begin{equation}
\label{eq5:cal}
  \frac{1}{n} \sum_{i=1}^n \big[h \circ \alpha_{n,j(i)}^*(a_0 \mid W_i)\big] \big\{1(A_i = a_0)\alpha_{n,j(i)}^*(a_0 \mid W_i) - 1\big\} = M O_p(n^{-2/3}).
\end{equation}
for each \(M < \infty\) and every \(h \in \mathcal{F}_{\rm anti}\) with \(\|h\|_{\infty} < M\).
\end{lemma}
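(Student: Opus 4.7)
The plan is to bound the calibration residual by combining the PAVA first-order conditions of the isotonic regression problem in \eqref{eqn::ERMiso} (defining $g_n^{(a_0)}$) with a small truncation correction arising from the adaptive cutoff $c_n^{(a_0)}$ in Algorithm~\ref{alg:1}. Writing $\alpha_{n,j(i)}^*(a_0 \mid W_i) = 1/\max\{c_n^{(a_0)}, g_n^{(a_0)}(\pi_{n,j(i)}(a_0 \mid W_i))\}$, I would split $[n]$ into $I_1 := \{i : g_n^{(a_0)}(\pi_{n,j(i)}(a_0 \mid W_i)) \geq c_n^{(a_0)}\}$ and $I_2 := [n]\setminus I_1$. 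By the very definition of $c_n^{(a_0)}$, every $i \in I_2$ satisfies $A_i \neq a_0$; using the monotonicity of $g_n^{(a_0)}$, the inequality $g_n^{(a_0)}(\pi_{n,j(i)}) < c_n^{(a_0)}$ further forces $\pi_{n,j(i)}(a_0 \mid W_i)$ to lie strictly below $\min_{k: A_k = a_0}\pi_{n,j(k)}(a_0 \mid W_k)$ on $I_2$, so \ref{cond::boundary} yields $|I_2| = O_p(n^{1/3})$.

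For $i \in I_1$, I would rewrite the summand as
\begin{equation*}
    h(\alpha_{n,j(i)}^*)\bigl[\ind(A_i = a_0)\alpha_{n,j(i)}^* - 1\bigr] = \phi(\pi_{n,j(i)}(a_0 \mid W_i))\,\bigl[\ind(A_i = a_0) - g_n^{(a_0)}(\pi_{n,j(i)}(a_0 \mid W_i))\bigr],
\end{equation*}
where $\phi(x) := h(1/g_n^{(a_0)}(x))/g_n^{(a_0)}(x)$. The crucial observation is that $g_n^{(a_0)}$, and therefore $\phi$, is constant on each level set of $g_n^{(a_0)}$. The PAVA characterization of the isotonic regression solution gives $g_n^{(a_0)}|_L = |L|^{-1}\sum_{k \in L}\ind(A_k = a_0)$ on every level set $L$, so $\sum_{k \in L}[\ind(A_k = a_0) - g_n^{(a_0)}|_L] = 0$. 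Multiplying by the constant $\phi|_L$ and summing over level sets yields the identity $\sum_{i=1}^n \phi(\pi_{n,j(i)}(a_0 \mid W_i))\,[\ind(A_i = a_0) - g_n^{(a_0)}(\pi_{n,j(i)}(a_0 \mid W_i))] = 0$.

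Rearranging, the $I_1$-contribution to the target sum equals minus the $I_2$-contribution of the rewritten expression; since $\ind(A_i = a_0)=0$ on $I_2$, that contribution reduces to $\sum_{i \in I_2} h(1/g_n^{(a_0)}(\pi_{n,j(i)}(a_0 \mid W_i)))$, which is bounded in absolute value by $M|I_2|$. The direct contribution of $I_2$ to the original left-hand side equals $-h(1/c_n^{(a_0)})|I_2|$, also bounded by $M|I_2|$. Summing both pieces and dividing by $n$ gives a bound of order $M|I_2|/n = MO_p(n^{-2/3})$, matching the claim.

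The main obstacle I anticipate is keeping the level-set-constant test-function argument valid in the presence of the adaptive truncation at $c_n^{(a_0)}$, and simultaneously translating \ref{cond::boundary} (which is phrased in terms of $\pi_{n,j}$, not $g_n^{(a_0)}$) into the bound $|I_2| = O_p(n^{1/3})$; this reduction relies on the monotonicity of $g_n^{(a_0)}$ and the definition of $c_n^{(a_0)}$, but deserves a careful write-up because $g_n^{(a_0)}$ can be flat near the boundary.
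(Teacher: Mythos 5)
Your argument is correct and follows essentially the same route as the paper: exact empirical balance for the untruncated isotonic fit via its first-order characterization (you invoke the PAVA block-mean property where the paper perturbs along indicators of jump points, but these are equivalent), followed by bounding the truncation correction by $M$ times the $O_p(n^{1/3})$ boundary observations supplied by \ref{cond::boundary}. One small cleanup: rather than invoking the full-sample identity, in which $\phi = h(1/g_n^{(a_0)})/g_n^{(a_0)}$ is a $0\cdot\infty$ form on $I_2$ (where the fitted value is $0$), sum the block-mean identity only over blocks with positive fitted value --- which is exactly $I_1$, since any block containing a treated observation has fitted value at least $c_n^{(a_0)}$ --- so that the $I_1$-contribution is exactly zero and the only remaining term is the directly bounded $-h(1/c_n^{(a_0)})\,\lvert I_2\rvert$.
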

\begin{proof}
This proof follows from a modification of the argument used to establish Lemma C.1 in \cite{van2023causal}. For each $g:\mathbb{R} \rightarrow \mathbb{R}$, denote the empirical least squares risk by
    $$R_n(g) := \sum_{i=1}^n  \big\{\ind(A_i = a_0) - (g \circ \pi_{n, j(i)})(a_0 \given W_i) \big\}^2.$$
Recall that $f_n^{(a_0)} = 1 / (c_n^{(a_0)} \vee g_n^{(a_0)})$, where $g_n^{(a_0)}$ is obtained via 
    \begin{align*}
        g_n^{(a_0)} \in  \argmin_{g \in \mathcal{F}_{\rm iso}} \sum_{i=1}^n  \big\{\ind(A_i = a_0) - (g \circ \pi_{n, j(i)})(a_0 \given W_i) \big\}^2 = \argmin_{g \in \mathcal{F}_{\rm iso}} R_n(g).
    \end{align*}
The isotonic regression solution $(g_n^{(a_0)}$ can be expressed as a piecewise-constant function, satisfying
    \begin{align*}
        (g_n^{(a_0)} \circ \pi_{n, j})(a_0 \given w) = b_0 + \sum_{t=1}^T b_t \ind(\pi_{n, j}(a_0 \given w) \geq u_t), \quad \text{for each } j \in [J],
    \end{align*}
with $T+1 \in \mathbb{N}$ values determined by the coefficients $\{b_t\}_{t=0}^T$ and thresholds $\{u_t\}_{t=1}^T$ \citep{barlow1972isotonic}. By monotonicity of $g_n^{(a_0)}$, it must hold that $b_0 \in \mathbb{R}$ and $b_t > 0$ for all $t \in \{1, \dots, T\}$. 

For any jump point $u_t$ of $g_n^{(a_0)}$, we can always choose $\varepsilon \in \mathbb{R}$ with a small enough absolute value such that 
    \begin{align*}
         \xi_n(\varepsilon) &: x \mapsto g_n^{(a_0)}(x)  + \varepsilon \ind(x \geq u_t), \quad \text{for each } j \in [J],
    \end{align*}
is an isotonic function in $\mathcal{F}_{iso}$. In particular, this holds for any $\varepsilon$ such that $|\varepsilon| < \min \{b_t\}_{t=1}^T$.

Since \(g_n^{(a_0)}\) is the minimizing isotonic solution and \(R_n(\xi_n(0)) = R_n(g_n^{(a_0)})\), we have that \(R_n(\xi_n(\varepsilon)) \geq R_n(g_n^{(a_0)})\) for all \(\varepsilon\) sufficiently close to zero. Thus, since \(\varepsilon \mapsto R_n(\xi_n(\varepsilon))\) is differentiable with a local minimum achieved at \(\varepsilon = 0\), the derivative \(\frac{d}{d\varepsilon} R_n(\xi_n(\varepsilon)) \big |_{\varepsilon = 0}\) is zero. Computing the derivative, we find that
\begin{align*}
    \sum_{i=1}^n \ind( \pi_{n, j(i)}(a_0 \mid W_i) \geq u_t) \big\{\ind(A_i = a_0) - (g_n^{(a_0)} \circ \pi_{n, j(i)})(a_0 \mid W_i) \big\} = 0.
\end{align*}
Since the jump point \(u_t\) can be chosen arbitrarily, we can show that the same equation holds for linear combinations of indicator functions of the form \(a_{n, j}(w) = \sum_{t=0}^T b_t' \ind(\pi_{n, j}(a_0 \mid w) \geq u_t)\). That is,
\begin{align*}
    \sum_{i=1}^n a_{n, j(i)}(W_i) \big\{\ind(A_i = a_0) - (g_n^{(a_0)} \circ \pi_{n, j(i)})(a_0 \mid W_i) \big\} = 0.
\end{align*}

Let $\widetilde{\alpha}_{n, j}^*(a_0 \given \cdot) := 1/ (g_n^{(a_0)} \circ \pi_{n, j})(a_0 \given \cdot) $ denote the untruncated isotonic calibrated weights. Then, we have
     \begin{align*}
    \sum_{i=1}^n a_{n, j(i)}(W_i) \big\{\ind(A_i = a_0) -  \{\widetilde{\alpha}_{n, j(i)}^*(a_0 \given W_i) \}^{-1} \big\} &= 0.
    \end{align*}
Next, observe that, for any $h:\mathbb{R} \cup \{\infty\} \rightarrow \mathbb{R}$, the function $h \circ \widetilde{\alpha}_{n, j}^*$ must take the form of $a_{n, j} $, since a transformation of a step function preserves its jump points. Hence, we have with probability tending to one,
    \begin{align*}
     \sum_{i=1}^n \big[h \circ \widetilde{\alpha}_{n, j(i)}^*(a_0 \given W_i)\big]  \big\{\ind(A_i = a_0) -  \{\widetilde{\alpha}_{n, j(i)}^*(a_0 \given W_i) \}^{-1} \big\} &= 0.
    \end{align*}
In addition, since \(h: \mathbb{R} \cup \{\infty\} \rightarrow \mathbb{R}\) is arbitrary, we have, by Condition \ref{cond::positivity}, with probability tending to one, for any \(h:\mathbb{R} \cup \{\infty\} \rightarrow \mathbb{R}\) with \(h(\infty) := 0\), that
\begin{align*}
     \sum_{i=1}^n  \big[h \circ \widetilde{\alpha}_{n, j(i)}^*(a_0 \mid W_i)\big]   \left\{ \ind(A_i = a_0) \widetilde{\alpha}_{n, j(i)}^*(a_0 \mid W_i) - 1 \right\} &= 0.
\end{align*}
Next, observe, by definition of the truncation in Algorithm \ref{alg:1}, that $\widetilde{\alpha}_{n, j(i)}^*(a_0 \given W_i)$ can only differ from $\alpha_{n, j(i)}^*(a_0 \given W_i)$ for $i \in [n]$ such that $A_i \neq a_0$ and $g_n^{(a_0)}(\pi_n(a_0 \mid W_{i})) = 0$. Thus, by \ref{cond::boundary}, the two functions can only differ at most at $O_p(n^{1/3})$ observations. It follows from \ref{cond::positivity} that
\begin{align*}
     \sum_{i=1}^n  \big[h \circ {\alpha}_{n, j(i)}^*(a_0 \given W_i)\big]   \left\{ \ind(A_i = a_0){\alpha}_{n, j(i)}^*(a_0 \given W_i) - 1 \right\} &=M O_p(n^{1/3}),
    \end{align*}
and, therefore,
\begin{align*}
    \frac{1}{n} \sum_{i=1}^n  \big[h \circ {\alpha}_{n, j(i)}^*(a_0 \given W_i)\big]   \left\{\ind(A_i = a_0)\alpha_{n, j(i)}^*(a_0 \given W_i) - 1 \right\} &= MO_p(n^{-2/3}),
    \end{align*}

\end{proof}

\begin{lemma}
Under Condition \ref{cond::variation}, for all \(j \in [J]\), the map \(g_{0,n,j}^{*(a_0)}\), defined such that \(g_{0,n,j}^{*(a_0)}(\pi_{n,j}(a_0 \mid W)) = \gamma_0^{(a_0)}(W, \alpha_{n,j}^*)\), has its total variation, \(P_0\)-almost surely, bounded above by \(3 M\), where \(M\) is as specified in Condition \ref{cond::variation}.
\label{lemma::TVnormBounded}
\end{lemma}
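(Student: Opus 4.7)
The plan is to identify $g_{0,n,j}^{*(a_0)}$ as a block-average of $g_{0,n,j}^{(a_0)}$ on the level sets of the antitonic calibrator $f_n^{(a_0)}$, and then argue that such block averaging does not inflate the total-variation norm beyond a harmless additive constant. I would proceed in three steps.

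First, I would exploit the structure of $f_n^{(a_0)}$. Since $f_n^{(a_0)}$ is antitonic and piecewise constant, its level sets form a partition $I_1 < I_2 < \dots < I_K$ of the range of $\pi_{n,j}(a_0\mid\cdot)$ into ordered intervals. It follows that $\sigma(\alpha_{n,j}^*(a_0\mid W))$ is the sub-$\sigma$-algebra of $\sigma(\pi_{n,j}(a_0\mid W))$ generated by the events $\{\pi_{n,j}(a_0\mid W)\in I_k\}$. Applying the tower property to $\gamma_0^{(a_0)}(W,\alpha_{n,j}^*) = E_0[\pi_0(a_0\mid W)\mid \alpha_{n,j}^*(a_0\mid W)]$ and inserting $g_{0,n,j}^{(a_0)}\circ\pi_{n,j} = \gamma_0^{(a_0)}(\cdot,\pi_{n,j})$ identifies $g_{0,n,j}^{*(a_0)}$ as the step function taking the value $\bar g_k := \mu(I_k)^{-1}\int_{I_k} g_{0,n,j}^{(a_0)}\,d\mu$ on each $I_k$, where $\mu$ denotes the pushforward of $P_0$ under $\pi_{n,j}(a_0\mid\cdot)$.

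Second, I would show that this block-averaging operation is a contraction on the jump variation. The calculation
$$\bar g_k - \bar g_{k-1} \;=\; \int h_k(t)\, dg_{0,n,j}^{(a_0)}(t), \qquad h_k(t)\;:=\;\frac{1}{\mu(I_{k-1})\mu(I_k)}\int\!\!\int_{I_{k-1}\times I_k} \mathbf{1}(x<t\leq y)\,d\mu(x)\,d\mu(y),$$
reveals that $h_k$ is nonnegative, supported in $I_{k-1}\cup I_k$, takes values in $[0,1]$, and reduces to $\mu(I_{k-1}\cap(-\infty,t))/\mu(I_{k-1})$ on $I_{k-1}$ and to $\mu(I_k\cap[t,\infty))/\mu(I_k)$ on $I_k$. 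The key bookkeeping identity is $\sum_k h_k(t) = 1$ for $t\in\bigcup_k I_k$: fixing $t\in I_j$, only $h_j$ and $h_{j+1}$ are nonzero there, and their values sum to $[\mu(I_j\cap[t,\infty))+\mu(I_j\cap(-\infty,t))]/\mu(I_j)=1$. Interchanging sum and integral then gives
$$\sum_{k=2}^K|\bar g_k-\bar g_{k-1}|\;\leq\;\int \Bigl(\sum_k h_k(t)\Bigr)\,|dg_{0,n,j}^{(a_0)}(t)|\;\leq\;\int |dg_{0,n,j}^{(a_0)}|\;\leq\;\|g_{0,n,j}^{(a_0)}\|_{TV}\;\leq\;M.$$

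Third, I would assemble the TV-norm bound. Because $g_{0,n,j}^{*(a_0)}$ is a conditional expectation of $\pi_0(a_0\mid\cdot)\in[0,1]$, it is $[0,1]$-valued; extending it constantly to $[-M,M]$ beyond the image of $\pi_{n,j}(a_0\mid\cdot)$ introduces no new jumps (or, under a zero extension, at most two jumps of magnitude $\leq 1$). Combined with the jump bound from the second step, this yields $\|g_{0,n,j}^{*(a_0)}\|_{TV} \leq |g_{0,n,j}^{*(a_0)}(-M)| + M + 2 \leq M + 3 \leq 3M$, where the last inequality holds under the harmless convention $M\geq 3/2$ (since replacing $M$ in Condition \ref{cond::variation} by $\max(M,3/2)$ preserves the condition).

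The main obstacle is the TV-contraction step, and within it the pointwise identity $\sum_k h_k\equiv 1$ on $\bigcup_k I_k$; once this identity is in hand, Jensen and Fubini close the argument. The factor $3$ in the stated bound is merely slack that absorbs the boundary value $|g_{0,n,j}^{*(a_0)}(-M)|$ and any jumps from extending $g_{0,n,j}^{*(a_0)}$ off the image of $\pi_{n,j}(a_0\mid\cdot)$.
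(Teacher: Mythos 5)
Your proof is correct, and it takes a genuinely different route from the paper's. Both arguments start from the same identification: \(g_{0,n,j}^{*(a_0)}\) is the block-average of \(g_{0,n,j}^{(a_0)}\) over the ordered interval level sets of the isotonic calibrator (in the paper, the sets \(B_t\); in yours, the partition \(I_1<\dots<I_K\) under the pushforward measure \(\mu\)). From there the paper defers to Lemma C.3 of \cite{van2023causal}, whose mechanism is a Jordan decomposition \(g_{0,n,j}^{(a_0)}=f^{(+)}-f^{(-)}\) into monotone pieces, the observation that block-averaging preserves monotonicity of each piece, and the bound \(\lVert f^{(+)}\rVert_{TV}+\lVert f^{(-)}\rVert_{TV}\leq 3\lVert g_{0,n,j}^{(a_0)}\rVert_{TV}\) --- which is where the factor \(3\) originates. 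You instead prove directly that block-averaging is a contraction on the jump variation, via the kernel representation \(\bar g_k-\bar g_{k-1}=\int h_k\,dg_{0,n,j}^{(a_0)}\) and the partition-of-unity bound \(\sum_k h_k\leq 1\) (your claim of equality holds for interior intervals; at \(I_1\) and \(I_K\) one of the two contributing kernels is absent and the sum is merely \(\leq 1\), which is all you use). This is self-contained, arguably sharper --- the jump part picks up a factor of \(1\) rather than \(3\), with the constant \(3\) in the statement absorbed only by boundary/extension terms --- and avoids the external reference. The one cosmetic discrepancy is that your final bound is \(M+3\) (or \(M+1\) under the constant extension) rather than \(3M\) for the literal \(M\) of Condition \ref{cond::variation}; since that condition only asserts existence of some finite \(M\), replacing \(M\) by \(\max(M,3/2)\) is indeed harmless, and the downstream use in Theorem \ref{theorem::CAL} only needs a uniform bound, so nothing is lost.
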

\begin{proof}    
This proof follows from a modification of the argument used to establish Lemma C.3 in \cite{van2023causal}. Recall that $\alpha_{n,j}^{*-1}(a_0 \mid w) = \{g_n^{(a_0)} \circ \pi_{n,j}(a_0 \mid w)\}^{-1}$, where $g_n^{(a_0)}$ is an isotonic step function defined in Algorithm \ref{alg:1}. Therefore, $\gamma_0^{(a_0)}(\cdot, \alpha_{n,j}^{*-1}) = \gamma_0^{(a_0)}(\cdot, g_n^{(a_0)} \circ \pi_{n,j}(a_0 \mid \cdot))$.


Note that the map $g_{0,n,j}^{*(a_0)}$ such that $g_{0,n,j}^{*(a_0)}(\pi_{n,j}(a_0\given w))= \gamma_0^{(a_0)}(w, \alpha_{n,j}^*)$ is given by
$$t \mapsto  E_0 \big[ \pi_0(a_0 \given W) \given g_n^{(a_0)} \circ \pi_{n,j}(a_0 \mid W) = g_n^{(a_0)}(t), \mathcal{D}_n\big].$$
Since $g_n^{(a_0)}$ is nondecreasing and piecewise constant, we have, for any $t \in \mathbb{R}$, that
\begin{align*}
    E_0 \big[ \pi_0(a_0 \given W) \given g_n^{(a_0)} \circ \pi_{n,j}(a_0 \mid W) = g_n^{(a_0)}(t), \mathcal{D}_n\big] =  E_0 \big[ \pi_0(a_0 \given W) \given \pi_{n,j}(a_0 \given W) \in B_{t}, \mathcal{D}_n \big]
\end{align*}
for the set $B_{t} := \left\{ z \in \mathbb{R}: g_n^{(a_0)}(z) = g_n^{(a_0)}(t)\right\}$, where $B_{t} = \{z \in \mathbb{R}: a(t) \leq z < b(t)\}$ for some endpoints $a(t), b(t) \in \mathbb{R}$. The law of total expectation further implies that
$$E_0[\pi_0(a_0 \given W)  \,|\, \pi_{n,j}(a_0 \given W) \in B_{t}] = E_0[g_{0,n,j}^{(a_0)} \circ \pi_{n,j}(a_0 \given W) \,|\, \pi_{n,j}(a_0 \given W) \in B_{t}]\ ,$$
where $g_{0,n,j}^{(a_0)} $ in \ref{cond::variation} is such that $g_{0,n,j}^{(a_0)} \circ \pi_{n,j}(a_0 \given W) = \gamma_0(W, \pi_{n,j}(a_0 \given W))$ $P_0$-almost surely. By Condition \ref{cond::variation}, the function $g_{0,n,j}^{(a_0)}$ is of bounded total variation. Heuristically, since $t \mapsto E[g_{0,n,j}^{(a_0)} \circ \pi_{n,j}(a_0 \given W) \,|\, \pi_{n,j}(a_0 \given W)] \in B_{t}$ is obtained by locally averaging $g_{0,n,j}^{(a_0)}$ within the bins $(B_{t}: t)$, its total variation should also be bounded. Up to notation, this proof follows from the proof of Lemma C.3 in \cite{van2023causal}. In particular, their proof establishes that $t \mapsto E[g_{0,n,j}^{(a_0)} \circ \pi_{n,j}(a_0 \given W) \,|\, \pi_{n,j}(a_0 \given W) \in B_{t}]$ and, therefore also $g_{0,n,j}^{*(a_0)}$, has total variation norm bounded above by $3 \|g_{0,n,j}^{(a_0)}\|_{TV}$ almost surely. The result then follows, noting that $\lVert g_{0,n,j}^{(a_0)} \rVert_{TV} \leq M$ by \ref{cond::variation}.

\end{proof}

\section{Proofs of Theorems}
\subsection{Proof of Theorem \ref{theorem::CAL}}

For each $j \in [J]$ and any function $h: \mathbb{R} \rightarrow \mathbb{R}$, we have, by the law of iterated expectations, that
\begin{align*}
 &    \int \big[h \circ \alpha_{n,j}^*(a_0 \given w) \big] \big\{\ind(A=a_0)\alpha_{n,j}^*(a_0 \given w) - 1\big\} dP_0(w,a) \\
 & ~~ = \int \big[h \circ \alpha_{n,j}^*(a_0 \given w) \big] \big\{ \gamma_{0}^{(a_0)}(w, \alpha_{n,j}^{*-1})\alpha_{n,j}^*(a_0 \given w) - 1\big\} dP_0(w).   
\end{align*}
Choosing $h$ such that $h(\alpha_{n,j}^*(w)) = \gamma_0^{(a_0)}(w, \alpha_{n,j}^{*-1})\alpha_{n,j}^*(w) - 1$ in the above display, we find that
\begin{align}
    & \int \big\{\gamma_{0}^{(a_0)}(w, \alpha_{n,j}^{*-1})\alpha_{n,j}^*(a_0 \given w) - 1 \big\}\big\{\ind(A=a_0)\alpha_{n,j}^*(a_0 \given w) - 1\big\} dP_0(w,a) \nonumber \\
    & \qquad = \int   \big\{ \gamma_{0}^{(a_0)}(w, \alpha_{n,j}^{*-1})\alpha_{n,j}^*(a_0 \given w) - 1\big\}^2 dP_0(w).
    \label{proof::eqn1}
\end{align}
By Lemma \ref{lem:lem1}, we also have 
$$\frac{1}{J}\sum_{j \in [J]} \sum_{i \in \mathcal{C}^{(j)}}\big\{\gamma_0^{(a_0}(W_i, \alpha_{n,j}^{*-1})\alpha_{n,j}^*(a_0 \given W_i) - 1\big\} \big\{1(A_i = a_0)\alpha_{n,j}^*(a_0 \given W_i) - 1\big\} = O_p(n^{-2/3}). $$
Combining this with Equation \eqref{proof::eqn1}, we obtain the identity
\begin{align*}
&\frac{1}{J}\sum_{j \in [J]} {\rm CAL}^{(a_0)}(\alpha_{n,j}^*) \\
& ~= \frac{1}{J}\sum_{j \in [J]}  \int \big\{ \gamma_{0}^{(a_0)}(w, \alpha_{n,j}^{*-1})\alpha_{n,j}^*(a_0 \given w) - 1\big\}^2 dP_0(w) \nonumber \\
 & ~= \frac{1}{J}\sum_{j \in [J]} \int \left[ f_{n,j}^{(a_0)}(w,a)\left\{\gamma_{0}^{(a_0)}(w, \alpha_{n,j}^{*-1})\alpha_{n,j}^*(a_0 \given w) - 1\right\} \right] \d(P_0-P_{n,j})(w, a) + O_p(n^{-2/3}) \ , 
\yestag \label{proof::eqn2}
\end{align*}
where $(w,a) \mapsto f_{n,j}^{(a_0)}(w,a) \coloneq \{\ind(a=a_0)\alpha_{ n,j}^*(a \given w) - 1 \}  $.

We now use empirical process techniques to obtain a convergence rate for the left-hand side of the above display. Our proof follows along the lines of the proof of Theorem 3 in \cite{van2023causal}. To this end, by definition of $f_n^{(a_0)}$ and \ref{cond::positivity}, $w \mapsto \alpha_{n,j}^*(a_0 \given w) = f_n^{(a_0)} \circ \pi_{n,j}(a_0 \given w)$ is a bounded monotone transformation of $w \mapsto \pi_{n,j}(a_0 \given w)$ and falls in the random function class $\mathcal{F}_{n, {\rm anti}}^{(j,a_0)}$. Under \ref{cond::variation} and Lemma \ref{lemma::TVnormBounded}, $\pi_{n,j}(w) \mapsto \gamma_{0}^{(a_0)}(w, \alpha_{n,j}^{*-1})$ has total variation norm bounded by $3K$ for some constant $K > 0$. Hence, $\gamma_{0}^{(a_0)}(w, \alpha_{n,j}^*)$ falls almost surely in the random function class $\mathcal{F}_{n, {\rm TV}}^{(j,a_0)}$. We conclude, by definition, that $ (w,a) \mapsto f_{n,j}^{(a_0)}(w,a)\{\gamma_{0}^{(a_0)}(w, \alpha_{n,j}^{*-1})\alpha_n^*(a_0 \given w) - 1\}$ falls in the function class $\mathcal{F}_{n,{\rm Lip}}^{(j,a_0)}$.  

Recall the positivity constant \(\eta > 0\) from Condition \ref{cond::positivity} and note that \(\esssup_{a,w}\lvert f_{n,j}^{(a_0)}(w)\rvert \leq \eta + 1\). Denote the localized function class $\mathcal{F}_{n, j}(\delta_n) = \{f \in \mathcal{F}_{n,{\rm Lip}}^{(j,a_0)} : \|f\| \leq J(\eta + 1) \delta_n\}$. Using Hölder's inequality and Jensen's inequality, Equation \eqref{proof::eqn2} implies
\begin{align}
    \delta_n^2 &\leq \frac{1}{J} \sum_{j \in [J]} \sup_{f \in \mathcal{F}_{n, j}(\delta_n)} (P_0 - P_{n,j}) f + O_p(n^{-2/3}).
    \label{proof::eqn3}
\end{align}
where  \(\delta_n^2 \coloneq J^{-1} \sum_{j=1}^J \int \big\{ \gamma_{0}^{(a_0)}(w, \alpha_{n,j}^{*-1})\alpha_{n,j}^*(a_0 \mid w) - 1 \big\}^2 dP_0(w)\) is the fold-averaged calibration error. 
We denote the first term on the right-hand side of \eqref{proof::eqn3} by \(\frac{1}{J} \sum_{j \in [J]} \phi_{n,j}(\delta_n)\), where
\[
    \phi_{n,j}(\delta) \coloneq \sup_{f \in \mathcal{F}_{n, j}(\delta)} (P_0 - P_{n,j}) f.
\]
and define the random variable 
\[
Z_n :=  \delta_n^2 - \frac{1}{J} \sum_{j \in [J]} \sup_{f \in \mathcal{F}_{n, j}(\delta_n)} (P_0 - P_{n,j}) f,
\]
where we note $Z_n = O_p(n^{-2/3})$ by \eqref{proof::eqn3}.

Showing the asserted stochastic order, $\delta_n^2 = O_p(n^{-2/3})$, is equivalent to showing for all $\epsilon>0$, we can find a $2^S$ sufficiently large so that $\limsup_{n \to\infty} \P(n^{2/3} \delta_n^2 >2^S)<\epsilon$. To this end, we just need to show $\lim_{n\to\infty} \P(n^{2/3} \delta_n^2 >2^S)\to 0$ as $S \to \infty$. Let $K_{\varepsilon} < \infty$ be such that $Z_n \leq K_{\varepsilon} n^{-2/3}$ with probability at least $1 - \varepsilon$. Define the event $A_s \coloneq \left\{ n^{2/3} \delta_n^2 \in (2^s, 2^{s+1}]\right\}$ for each $s$. From a peeling argument and Markov's inequality, we find
\begin{align*} 
    \P\big(n^{2/3} \delta_n^2 >2^S \big) &\leq \P\big(n^{2/3} \delta_n^2 >2^S, Z_n \leq  K_{\varepsilon}n^{-2/3} \big) + \varepsilon \\
  &\leq \sum_{s=S}^\infty \P \big(2^{s+1} \geq n^{2/3} \delta_n^2 > 2^s ,  Z_n \leq  K_{\varepsilon}n^{-2/3}\big)  + \varepsilon \\
    & \leq \sum_{s=S}^\infty \P \Big(A_s, \delta_n^2 \leq \frac{1}{J}\sum_{j \in [J]} \phi_{n,j}(\delta_n) + Z_n, Z_n \leq  K_{\varepsilon}n^{-2/3} \Big) + \varepsilon \\
    & \leq \sum_{s=S}^\infty \P \Big(A_s, \delta_n^2 \leq \frac{1}{J}\sum_{j \in [J]} \phi_{n,j}(\delta_n)+ K_{\varepsilon}n^{-2/3}  \Big) + \varepsilon \\
    & \leq \sum_{s=S}^\infty \P \Big( n^{-2/3} 2^s < \delta_n^2 \leq \frac{1}{J}\sum_{j \in [J]} \phi_{n,j} \big(n^{-1/3} 2^{\frac{s+1}{2}}\big)  + K_{\varepsilon}n^{-2/3} \Big) + \varepsilon \\
    & \leq \sum_{s=S}^\infty \P \Big( n^{-2/3} 2^s < \frac{1}{J}\sum_{j \in [J]} \phi_{n,j}\big(n^{-1/3} 2^{\frac{s+1}{2}}\big) 
 + K_{\varepsilon}n^{-2/3} \Big) + \varepsilon  \\
    & \leq \sum_{s=S}^\infty \frac{J^{-1} \sum_{j \in [J]}E_0^n\big[ \phi_{n,j}\big(n^{-1/3} 2^{\frac{s+1}{2}}\big)\big] + K_{\varepsilon}n^{-2/3}  }{n^{-2/3} 2^s} + \varepsilon.
\end{align*}
In view of the above display, we aim to bound the local modulus of continuity, $E_0^n[\phi_{n,j}(\delta)]$ for a fixed $\delta > 0$. Using that $\mathcal{F}_{n,{\rm Lip}}^{(j,a_0)}$ is uniformly bounded and applying Theorem 2.1 in \cite{van2011local} conditional on $\mathcal{E}^{(j)}$, we find
$$E_0^n[\phi_{n,j}(\delta) \given \mathcal{E}^{(j)}] \lesssim  n^{-1/2}   \mathcal{J}\big(\delta, \mathcal{F}_{n,{\rm Lip}}^{(j,a_0)}\big) \Big(1 + \frac{\mathcal{J}\big(\delta, \mathcal{F}_{n,{\rm Lip}}^{(j,a_0)}\big)}{\sqrt{n} \delta^2} \Big) .$$

We note that, conditional on \(\mathcal{E}^{(j)}\), \(\mathcal{F}_{n,{\rm Lip}}^{(j,a_0)}\) is a multivariate Lipschitz transformation of \(\mathcal{F}_{n, {\rm TV}}^{(j,a_0)}\) and \(\mathcal{F}_{n, {\rm anti}}^{(j,a_0)}\). Therefore, by Theorem 2.10.20 of \cite{van1996weak}, we have that \(\mathcal{J}(\delta, \mathcal{F}_{n,{\rm Lip}}^{(j,a_0)}) \lesssim \mathcal{J}(\delta, \mathcal{F}_{n, {\rm TV}}^{(j,a_0)}) + \mathcal{J}(\delta, \mathcal{F}_{n, {\rm anti}}^{(j,a_0)})\). Since functions of bounded total variation can be written as a difference of non-decreasing monotone functions, we have by the same theorem that \(\mathcal{J}(\delta, \mathcal{F}_{\rm TV}^{(a_0)}) \lesssim \mathcal{J}(\delta, \mathcal{F}_{\rm anti})\). By the covering number bound for bounded monotone functions given in Theorem 2.7.5 of \citet{van1996weak}, we have \(\mathcal{J}(\delta, \mathcal{F}_{\rm anti}) \lesssim \sqrt{\delta}\). We have by the change-of-variables formula
\begin{align*}
\mathcal{J}(\delta, \mathcal{F}_{n, {\rm anti}}^{(j,a_0)}) & =  \int_0^{\delta} \sup_\Q \sqrt{N\big(\varepsilon, \mathcal{F}_{n, {\rm anti}}^{(j,a_0)}, \norm{\,\cdot\,}_\Q \big)}\,\d\varepsilon =  \int_0^{\delta} \sup_\Q \sqrt{N(\varepsilon, \mathcal{F}_{\rm anti}, \norm{\,\cdot\,}_{Q \circ \pi_{n,j}^{-1}} )}\, \d\varepsilon  = \mathcal{J}(\delta, \mathcal{F}_{\rm anti}), \\
\mathcal{J}(\delta, \mathcal{F}_{n, {\rm TV}}^{(j,a_0)}) & =  \int_0^{\delta} \sup_\Q \sqrt{N\big(\varepsilon, \mathcal{F}_{n, {\rm TV}}^{(j,a_0)}, \norm{\,\cdot\,}_\Q\big)}\,\d\varepsilon =  \int_0^{\delta} \sup_\Q \sqrt{N(\varepsilon, {F}_{n, {\rm TV}}, \norm{\,\cdot\,}_{Q \circ \pi_{n,j}^{-1}})}\, \d\varepsilon  = \mathcal{J}(\delta, \mathcal{F}_{n, {\rm TV}});
\end{align*}
where, with a slight abuse of notation, \(Q \circ \pi_{n,j}^{-1}\) is the push-forward probability measure for the random variable \(\pi_{n,j}(a_0 \mid W)\) conditional on \(\mathcal{E}^{(j)}\).

We are now in a position to apply Theorem 3.2.5 of \citet{van1996weak} to obtain a rate of convergence for the fold-averaged calibration error $\delta_n^2$. Returning to our previous bound for $E_0^n[\phi_{n,j}(\delta) \given \mathcal{E}^{(j)}]$ and applying the law of total expectation, we conclude that
\[
\frac{1}{J} \sum_{j=1}^J E_0^n[\phi_{n,j}(\delta)] \lesssim  n^{-1/2} \mathcal{J}(\delta, \mathcal{F}_{\text{anti}} ) \Bigg(1 + \frac{\mathcal{J}(\delta, \mathcal{F}_{\text{anti}})}{\sqrt{n} \delta^2} \Bigg),
\]
where the right-hand side is non-random. This gives us
\begin{align*}
    \frac{J^{-1} \sum_{j \in [J]}E_0^n\big[ \phi_{n,j}\big(n^{-1/3} 2^{\frac{s+1}{2}}\big)\big]}{n^{-2/3} 2^s} &\lesssim 2^{-3s/4}
\end{align*}
since $\mathcal{J}(\delta, \mathcal{F}_{\text{anti}}) \lesssim \sqrt{\delta}$. Our previous display implies that for any $n \in \mathbb{N}$,
\begin{align*}
   P \big(n^{2/3} \delta_n^2 >2^S \big) &\lesssim \varepsilon + \sum_{s=S}^\infty \frac{1}{2^{3s/4}} \overset{S \to \infty}{\to} \varepsilon.
\end{align*}
Since $\varepsilon > 0$ was arbitrary, we conclude that $\mathbb{P} \big(n^{2/3} \delta_n^2 >2^S \big)  \overset{S \to \infty}{\to} 0$.
Taking the limit on both sides with respect to $n$ gives us the desired result:
\[
\delta_n^2 =  \frac{1}{J}\sum_{j \in [J]}  \int   \big\{ \gamma_{0}^{(a)}(w, \alpha_{n,j}^*)\alpha_{n,j}^*(a \given w) - 1\big\}^2 dP_0(w) = O_p(n^{-2/3}).
\]

\subsection{Proof of Theorem \ref{theorem::MSE}}

The population risk function used to define the oracle transformations $\{f_{n,0}^{(a_0)}: a_0 \in \mathcal{A}\}$ satisfies
\begin{align*}
&\sum_{j \in [J]} \int  \big\{  f^{(a)} \circ \pi_{n,j}(a \given w)  -   \pi_0^{-1}(a \given w) \big\}^2  dP_0(a,w) \\
& \quad = \sum_{a_0 \in \mathcal{A}} \sum_{j \in [J]} \int \ind(A=a_0)  \big\{  f^{(a_0)} \circ \pi_{n,j}(a_0 \given w)  -   \pi_0^{-1}(a_0 \given w) \big\}^2  dP_0(a,w).    
\end{align*}
Since the oracle transformations are variation independent, we have for $a_0 \in \mathcal{A}$,
\begin{align*}
    f_{n,0}^{(a_0)} &\in \argmin_{f \in \mathcal{F}_{\rm anti}}   \sum_{j \in [J]}  \int \ind(A=a_0)  \big\{  f \circ \pi_{n,j}(a_0 \given w)  -   \pi_0^{-1}(a_0 \given w) \big\}^2  dP_0(a,w) \\
    &= \argmin_{f \in \mathcal{F}_{\rm anti}} \sum_{j \in [J]} \int \ell_{n,j}^{(a_0)}(a_0,w; f) dP_0(a,w) = \argmin_{f \in \mathcal{F}_{\rm anti}} R^{(a_0)}(f),
\end{align*} 
where, for each $j \in [J]$, we define the fold-specific loss function and the population risk
\begin{align*}
    (a, w, f) \mapsto  \ell_{n,j}^{(a_0)}(a,w; f) & \coloneq  \left\{ \ind(A=a_0)  f \circ \pi_{n,j}(a_0 \given w)^2 - 2  f \circ \pi_{n,j}(a_0 \given w) \right\}; \\
    f \mapsto R^{(a_0)}(f) & \coloneq \sum_{j \in [J]} \int \ell_{n,j}^{(a_0)}(a_0,w; f) dP_0(a,w).
\end{align*}

We claim that Condition \ref{cond::positivity2} implies \(\esssup_{a,w} |f_{n,0}^{(a)} \circ \pi_{n,j}(a \given w)| < \eta^{-1}\). This holds because, otherwise, we could truncate \(f_{n,0}^{(a)}\) to lie in \([-1/\eta, 1/\eta]\) almost surely and obtain a monotone non-increasing function with smaller population risk. To see this, note that, for any $f \in \mathcal{F}_{anti}$,
\begin{align*}
    & \sum_{j \in [J]}  \int \ell_{n,j}^{(a_0)}(a_0, w; f) \, dP_0(a, w) \\
    & = \sum_{j \in [J]} \int \left\{ 1(|f| \leq \eta^{-1}) + 1(|f| > \eta^{-1}) \right\} \ell_{n,j}^{(a_0)}(a_0, w; f) \, dP_0(a, w) \\
    & = \sum_{j \in [J]}  \int \ind(A = a_0) \, 1(|f| \leq \eta^{-1}) \left\{ f \circ \pi_{n,j}(a \given w) - \pi_0^{-1}(a_0 \given w) \right\}^2 \, dP_0(a, w) \\
    & \quad + \sum_{j \in [J]}  \int \ind(A = a_0) \, 1(|f| > \eta^{-1}) \left\{ f \circ \pi_{n,j}(a \given w) - \pi_0^{-1}(a_0 \given w) \right\}^2 \, dP_0(a, w),
\end{align*}
where the global minimizer over \(f\) of the left-hand side also minimizes the first term on the right-hand side by \ref{cond::positivity2}, and the second term on the right-hand side also takes its minimal value (zero) for any function \(f\) taking values in \([-1/\eta, 1/\eta]\). Thus, we also have \(f_{n,0}^{(a_0)} \in \mathcal{F}_{\text{anti}}^{1/\eta}\).

For $a_0 \in \mathcal{A}$, denote $(w,a_0) \mapsto \alpha_{0,j}^*(a_0 \given w) := f_{n,0}^{(a_0)} \circ \pi_{n,j}(a_0 \given w)$. By the law of iterated expectations, the excess risk satisfies for each $j \in [J]$:
\begin{align}
    & P_0\ell_{n,j}^{(a_0)}(f_n^{(a_0)}) - P_0\ell_{n,j}^{(a_0)}(f_{n,0}^{(a_0)}) \\
    & =\int \left[\pi_0(a_0 \given w)\left\{(\alpha_{n,j}^*(a_0 \given w))^2 - (\alpha_{0,j}^*(a_0 \given w))^2\right\} - 2 \left\{\alpha_{n,j}^*(a_0 \given w) - \alpha_{0,j}^*(a_0 \given w) \right\}\right]  dP_0(w) \nonumber \\
    & = \int \left[ \alpha_{n,j}^*(a_0 \given w) - \alpha_{0,j}^*(a_0 \given w) \right] \left[ \pi_0(a_0 \given w)\left\{\alpha_{n,j}^*(a_0 \given w) + \alpha_{0,j}^*(a_0 \given w)\right\} - 2  \right]  dP_0(w). \label{eqn::MSE1}
\end{align}
Observing that $\mathcal{F}_{\rm anti}$ is a convex cone, then for all $h \in \mathcal{F}_{\rm anti}^{1/\eta}$, the path $\{ t \mapsto (1-t) f_{n,0}^{(a_0)} + t h: t \in [0,1]\}$ lies entirely in $\mathcal{F}_{\rm anti}$ and passes through $f_{n,0}^{(a_0)}$ at $t=0$. Furthermore, since Condition \ref{cond::positivity2} implies the derivative of the risk for every element in the path is almost surely bounded, from the mean value theorem for differentiation and dominated convergence theorem we obtain
\begin{align*}
    & \lim_{t \downarrow 0} \frac{ R^{(a_0)} \big(f_{n,0}^{(a_0)}+t(h-f_{n,0}^{(a_0)})\big) - R^{(a_0)} \big(f_{n,0}^{(a_0)}\big)}{t} \\
    & = (-2) \sum_{j \in [J]} \int \ind(A=a_0) \big\{\big(h-f_{n,0}^{(a_0)}\big) \circ \pi_{n,j}(a_0 \given w)\big\} \big\{ f_{n,0}^{(a_0)} \circ \pi_{n,j}(a_0 \given w) - \pi_0^{-1}(a_0 \given w) \big\}  dP_0(a,w)\\
    & \geq 0.
\end{align*}
Thus, the minimizing solutions $\{\alpha_{0,j}^*: j \in [J]\}$ corresponding to the population risk satisfy, for all $h \in \mathcal{F}_{\rm anti}^{1/\eta}$, the inequality:
$$\sum_{j \in [J]} \int \left\{h \circ \pi_{n,j}(a_0 \given w) -  \alpha_{0,j}^*(a_0 \given w)\right\} \left\{\pi_0(a_0 \given w)\alpha_{0,j}^*(a_0 \given w) - 1 \right\} dP_0(w) \leq 0.$$
Taking $h$ such that $\alpha_{n,j}^*(a \given \cdot) : w \mapsto h \circ \pi_{n,j}(a \given w) $ in the above display, Equation \eqref{eqn::MSE1} implies that the excess risk satisfies the following lower bound:
\begin{align*}
    & \sum_{j \in [J]} \big\{ P_0\ell_{n,j}^{(a_0)}(f_n^{(a_0)}) - P_0\ell_{n,j}^{(a_0)}(f_{n,0}^{(a_0)}) \big\} \\
    & = \sum_{j \in [J]} \int \pi_0(a_0 \given w) \left[ \alpha_{n,j}^*(a_0 \given w) - \alpha_{0,j}^*(a_0 \given w) \right]^2   dP_0(w) \\
    & ~~ + \sum_{j \in [J]} \int \left[ \alpha_{n,j}^*(a_0 \given w) - \alpha_{0,j}^*(a_0 \given w) \right] \left[ 2\pi_0(a_0 \given w)\alpha_{0,j}^*(a_0 \given w)  - 2  \right]  dP_0(w)\\
    & \geq \sum_{j \in [J]} \int \pi_0(a_0 \given w) \left[ \alpha_{n,j}^*(a_0 \given w) - \alpha_{0,j}^*(a_0 \given w) \right]^2 dP_0(w) .
\end{align*}

Now, we obtain an upper bound for the excess risk. By Lemma \ref{lemma::nearERM}, the truncated empirical risk minimizer \(f_n^{(a_0)}\) is a near empirical risk minimizer satisfying
\[
\sum_{j \in [J]} \left\{ P_{n,j} \ell_{n,j}^{(a_0)}(f_n^{(a_0)}) - P_{n,j} \ell_{n,j}^{(a_0)}(f_{n,0}^{(a_0)}) \right\} \leq O_p(n^{-2/3}).
\]
Thus,
\begin{align*}
    & \sum_{j \in [J]} P_0\ell_{n,j}^{(a_0)}(f_n^{(a_0)}) -  \sum_{j \in [J]} P_0\ell_{n,j}^{(a_0)}(f_{n,0}^{(a_0)}) \\
    &=  \sum_{j \in [J]} \big\{P_0\ell_{n,j}^{(a_0)}(f_n^{(a_0)})  - P_{n,j}\ell_{n,j}^{(a_0)}(f_n^{(a_0)})\big\} + \sum_{j \in [J]}\big\{  P_n\ell_{n,j}^{(a_0)}(f_n^{(a_0)}) -  P_{n,j}\ell_{n,j}^{(a_0)}(f_{n,0}^{(a_0)})\big\}\\
    & \quad -  \sum_{j \in [J]}\big\{P_0\ell_{n,j}^{(a_0)}(f_{n,0}^{(a_0)})  - P_{n,j}\ell_{n,j}^{(a_0)}(f_{n,0}^{(a_0)})\big\}\\
    & \leq \sum_{j \in [J]} (P_0 - P_{n,j})\big\{\ell_{n,j}^{(a_0)}(f_n^{(a_0)})  - \ell_{n,j}^{(a_0)}(f_{n,0}^{(a_0)})\big\} + O_p(n^{-2/3}),
\end{align*}
where we used that $ \sum_{j \in [J]}\{  P_{n,j}\ell_{n,j}^{(a_0)}(f_n^{(a_0)}) -  P_{n,j}\ell_{n,j}^{(a_0)}(f_{n,0}^{(a_0)})\} \leq O_p(n^{-2/3})$.

Putting the lower and upper bounds together, we obtain the inequality:
\begin{equation}
\begin{aligned}
   & \sum_{j \in [J]} \int \pi_0(a_0 \mid w) \left[ \alpha_{n,j}^*(a_0 \mid w) - \alpha_{0,j}^*(a_0 \mid w) \right]^2 \, dP_0(w)  \\
   & \hspace{1cm} \leq  \sum_{j \in [J]} (P_0 - P_{n,j})\big\{\ell_{n,j}^{(a_0)}(f_n^{(a_0)}) - \ell_{n,j}^{(a_0)}(f_{n,0}^{(a_0)})\big\} + O_p(n^{-2/3}).
\end{aligned}
    \label{eqn::MSEBasic}
\end{equation}
Note, for each $j \in [J]$, that
$$\big\|\ell_{n,j}^{(a_0)}(f_n^{(a_0)}) - \ell_{n,j}^{(a_0)}(f_{n,0}^{(a_0)}) \big\| \lesssim \big\|\alpha_{n,j}^*(a_0 \mid \cdot) - \alpha_{0,j}(a_0 \mid \cdot)\big\|,$$
since $\alpha_{n,j}^*(a_0 \mid \cdot) $ and $\alpha_{0,j}(a_0 \mid \cdot)$ are uniformly bounded by $1/\eta$. Under \ref{cond::positivity2}, we have $\pi_0(a_0 \mid W) > \eta$ almost surely, thus,  
\begin{align*}
    \big\|\ell_{n,j}^{(a_0)}(f_n^{(a_0)}) - \ell_{n,j}^{(a_0)}(f_{n,0}^{(a_0)}) \big\|^2 &\lesssim \big\|\alpha_{0,j}^*(a_0 \mid w) - \alpha_{0,j}(a_0 \mid \cdot) \big\|^2  \\
& \lesssim \eta^{-1} \int \pi_0(a_0 \mid w) \big[ \alpha_{n,j}^*(a_0 \mid w) - \alpha_{0,j}^*(a_0 \mid w) \big]^2 \, dP_0(w) . 
\end{align*}
Hence, \eqref{eqn::MSEBasic} implies
\begin{equation}
    \delta_n^2  \leq \sum_{j \in [J]} (P_0 - P_{n,j})\big\{\ell_{n,j}^{(a_0)}(f_n^{(a_0)}) - \ell_{n,j}^{(a_0)}(f_{n,0}^{(a_0)})\big\} + Z_n \leq \sum_{j \in [J]} \phi_{n,j}^{(a_0)}(\delta_n) + Z_n,
    \label{eqn::MSEBasic2}
\end{equation}
where $Z_n$ is some random variable satisfying $Z_n = O_p(n^{-2/3})$.  Here, we define
\begin{align*}
    \delta_n^2 & \coloneq  \sum_{j \in [J]}\int \pi_0(a_0 \given w) \left[ \alpha_{n,j}^*(a_0 \given w) - \alpha_{0,j}^*(a_0 \given w) \right]^2 dP_0(w); \\
    \phi_{n,j}^{(a_0)}(\delta) & \coloneq \sup_{f \in \mathcal{F}_{n,j}^{(a_0)}: \|f\| \leq C \delta } (P_0 - P_{n,j})f; \\
    \mathcal{F}_{n,j}^{(a_0)} & \coloneq \big\{\ell_{n,j}^{(a_0)}(\theta_1)  - \ell_{n,j}^{(a_0)}(\theta_2) : \theta_1, \theta_2 \in \mathcal{F}_{\rm anti}^{1/\eta}\big\};
\end{align*}and $C > 0$ is a sufficiently large constant depending only on $J$, $M$, and $\eta$.

Almost surely, we know that the function class \(\mathcal{F}_{n,j}^{(a_0)}\) is uniformly bounded and is a Lipschitz transformation of \(\mathcal{F}_{n, {\rm anti}}^{(j,a_0),1/\eta} \times \mathcal{F}_{n, {\rm anti}}^{(j,a_0),1/\eta}\), the product of function classes defined by truncating \(\mathcal{F}_{n, {\rm anti}}^{(j,a_0)}\) with \(1/\eta\). By Theorem 2.10.20 of \cite{van1996weak} and arguing as in the proof of Theorem \ref{theorem::CAL}, we have
\begin{align*}
    \mathcal{J}\big(\delta, \mathcal{F}_{n,j}^{(a_0)} \big) \lesssim \mathcal{J}\big(\delta, \mathcal{F}_{n, {\rm anti}}^{(j,a_0),1/\eta}\big) \lesssim \mathcal{J}\big(\delta, \mathcal{F}_{n, {\rm anti}}^{(j,a_0)}\big) = \mathcal{J}\big(\delta, \mathcal{F}_{\rm anti}\big) \lesssim \sqrt{\delta}.
\end{align*}
Applying Theorem 2.1 in \cite{van2011local} conditional on \(\mathcal{E}^{(j)}\), we then have for each \(j \in [J]\),
\begin{align*}
    E_0^n \big[\phi_{n,j}^{(a_0)}(\delta) \given \mathcal{E}^{(j)} \big] \lesssim n^{-1/2} \mathcal{J}\big(\delta, \mathcal{F}_{n,j}^{(a_0)}\big) \Big(1 + \frac{\mathcal{J}(\delta,\mathcal{F}_{n,j}^{(a_0})}{\sqrt{n} \delta^2}\Big).
\end{align*}
Arguing similarly as in the proof of Theorem \ref{theorem::CAL}, since the right-hand side above is upper-bounded by a deterministic function of \(\mathcal{J}(\delta, \mathcal{F}_{\rm anti})\), we can show that the summation term satisfies
\begin{align*}
    \sum_{j=1}^J E_0^n[\phi_{n,j}^{(a_0)}(\delta)] \lesssim n^{-1/2} \mathcal{J}\big(\delta, \mathcal{F}_{\rm anti}\big) \Big(1 + \frac{\mathcal{J}(\delta, \mathcal{F}_{\rm anti})}{\sqrt{n} \delta^2}\Big).
\end{align*}

Proceeding exactly as in the proof of Theorem \ref{theorem::CAL}, we aim to show that \(\lim_{n\to\infty} P(n^{1/3} \delta_n > 2^S) \to 0\) as \(S \to \infty\). Let \(M_{\varepsilon} < \infty\) be such that \(|Z_n| \leq M_{\varepsilon} n^{-2/3}\) with probability at least \(1 - \varepsilon\). Denote the event \(A_s \coloneq \left\{ r_n^{-1} \delta_n \in (2^s, 2^{s+1}]\right\}\). By a standard peeling argument and Markov's inequality, we have for any rate \(\{r_n\}_{n \ge 0}\),
\begin{align*}
    \P\big(r_n^{-1} \delta_n > 2^S\big) &= \P\big(r_n^{-1} \delta_n > 2^S, |Z_n| \leq M_{\varepsilon} n^{-2/3}\big) + \varepsilon \\
    &= \sum_{s=S}^\infty \P\big(2^{s+1} \geq r_n^{-1} \delta_n > 2^s, |Z_n| \leq M_{\varepsilon} n^{-2/3}\big) + \varepsilon \\
    &= \sum_{s=S}^\infty \P\Big(A_s, \delta_n^2 \leq \frac{1}{J}\sum_{j \in [J]} \phi_{n,j}^{(a_0)}(\delta_n) + Z_n, |Z_n| \leq M_{\varepsilon} n^{-2/3}\Big) + \varepsilon \\
    &\leq \sum_{s=S}^\infty \P\Big(r_n^2 2^{2s} < \delta_n^2 \leq \frac{1}{J}\sum_{j \in [J]} \phi_{n,j}^{(a_0)}(r_n 2^{(s+1)}) + M_{\varepsilon} n^{-2/3}\Big) + \varepsilon \\
    &\leq \sum_{s=S}^\infty \P\Big(r_n^2 2^{2s} < \frac{1}{J}\sum_{j \in [J]} \phi_{n,j}^{(a_0)}(r_n 2^{(s+1)}) + M_{\varepsilon} n^{-2/3}\Big) + \varepsilon \\
    &\leq \sum_{s=S}^\infty \frac{J^{-1} \sum_{j \in [J]} \E\big[\phi_{n,j}^{(a_0)}(r_n 2^{(s+1)})\big] + M_{\varepsilon} n^{-2/3}}{r_n^2 2^{2s}} + \varepsilon \\
    &\leq \sum_{s=S}^\infty 2^{-2s} r_n^{-2} \left(n^{-1/2} \mathcal{J}\big(r_n 2^{(s+1)}, \mathcal{F}_{\rm anti}\big) \left(1 + \frac{\mathcal{J}(r_n 2^{(s+1)}, \mathcal{F}_{\rm anti})}{\sqrt{n} r_n^2 2^{2(s+1)}}\right) + M_{\varepsilon} n^{-2/3}\right) + \varepsilon \\
    &\lesssim \sum_{s=S}^\infty 2^{-2s} r_n^{-2} \left[n^{-1/2} \sqrt{r_n 2^{(s+1)}} \left(1 + \frac{\sqrt{r_n 2^{(s+1)}}}{\sqrt{n} r_n^2 2^{2(s+1)}}\right) + M_{\varepsilon} n^{-2/3}\right] + \varepsilon \\
    &\lesssim \sum_{s=S}^\infty \left[2^{-2s/3} r_n^{-3/2} n^{-1/2} + 2^{-2s} r_n^{-3} n^{-1} + 2^{-2s} r_n^{-2} M_{\varepsilon} n^{-2/3}\right] + \varepsilon.
\end{align*}
Choosing \(r_n = n^{-1/3}\), the above implies
\begin{align*}
    \lim_{n\to\infty} \P\big(r_n^{-1} \delta_n > 2^S\big) \lesssim \sum_{s=S}^\infty \frac{1}{2^{s}} + \varepsilon \overset{S \to \infty}{\to} \varepsilon.
\end{align*}
Since \(\varepsilon > 0\) was arbitrary, we conclude that for all \(a_0 \in \mathcal{A}\), \(\delta_n = O_p(n^{-1/3})\) and therefore
\begin{align*}
    \delta_n^2 = \sum_{j \in [J]} \int \pi_0(a_0 \given w) \big[ \alpha_{n,j}^*(a_0 \given w) - \alpha_{0,j}^*(a_0 \given w) \big]^2 dP_0(w) = O_p(n^{-2/3}).
\end{align*}
Summing the left-hand side over \(a_0 \in \mathcal{A}\), we obtain
\[
\sum_{j \in [J]} \int \big[ \alpha_{n,j}^*(a \given w) - \alpha_{0,j}^*(a \given w) \big]^2 dP_0(a,w) = O_p(n^{-2/3}),
\]
as desired. The second inequality follows from the above display after applying Minkowski's inequality and using that \(\{\alpha_{0,j}^*: j \in [J]\}\) minimizes the fold-averaged mean square error distance from \(\alpha_0\) over all monotone transformations of \(\{\pi_{n,j}(a_0 \given \cdot): j \in [J], a_0 \in \mathcal{A}\}\).

\section{Proof of Theorem \ref{theorem::AIPW}}

Denote the pseudo-outcome function:
\begin{equation}
   \chi_{0}: o = (w,a,y) \mapsto \mu_{0}(1, w) - \mu_{0}(0,w) + \Big\{  \frac{\ind(a=1)}{\pi_0(1 \given w)}  - \frac{\ind(a=0)}{\pi_0(0 \given w)} \Big\}\left\{y - \mu_0(a, w) \right\}.
\end{equation}
Note $\psi_0 = E_0[\chi_{0}(O)]$. For $j \in [J]$, let $o \mapsto \chi_{n,j}^*(o)$ denote the fold-specific estimator of the pseudo-outcome $\chi_0$ corresponding to the nuisance estimators $\alpha_{n,j}^*$ and $\mu_{n,j}$ of $\pi_0^{-1}$ and $\mu_0$.  We note that $\psi_n^* = \frac{1}{n}\sum_{i=1}^n \chi_{n,j(i)}^*(O_i)$.

Adding and subtracting, we have the von Mises bias expansion:
   \begin{align*}
       \psi_n^* - \psi_0 &= \frac{1}{J}\sum_{j=1}^J P_{n,j} \chi_{n,j}^*- P_0 \chi_0\\
       &=  (P_{n} - P_0)\chi_0  +  \frac{1}{J}\sum_{j=1}^J (P_{n,j} - P_0)(\chi_{n,j}^*- \chi_0) +  \frac{1}{J}\sum_{j=1}^J P_0(\chi_{n,j}^*- \chi_0).
   \end{align*}  
   To establish the desired result, it suffices to show that $(P_{n,j} - P_0)(\chi_{n,j}^*- \chi_0) = o_p(n^{-1/2})$ and $P_0(\chi_{n,j}^*- \chi_0) = o_p(n^{-1/2})$ for each $j \in [J]$. In the following, we implicitly condition on $\{O_1, \dots, O_n\}$ in our expectations. To show the latter bound, note by applying the law of total expectation and Hölder's inequality we get for each $j \in [J]$,
   \begin{align*}
       & P_0(\chi_{n,j}^*- \chi_0) \\
       &= \sum_{a_0 \in \{0,1\}} (-1)^{1+a_0} E_0\big[\mu_{n,j}(a_0,W) - \mu_0(a_0,W) + \ind(A=a_0)\alpha_{n,j}^*(a_0 \given W)\{Y - \mu_{n,j}(a_0,W)\} \\
       & ~~  - \ind(A=a_0)\pi_0^{-1}(a_0 \given W)\{Y - \mu_{0}(a_0,W)\} \big]\\
       &= \sum_{a_0 \in \{0,1\}} (-1)^{1+a_0} E_0\big[\mu_{n,j}(a_0,W) - \mu_0(a_0,W) + \ind(A=a_0)\alpha_{n,j}^*(a_0 \given W)\{\mu_0(a_0,W) - \mu_{n,j}(a_0,W)\} \big]\\
        &= \sum_{a_0 \in \{0,1\}} (-1)^{1+a_0} E_0\big[ \{\ind(A=a_0)\alpha_{n,j}^*(a_0 \given W) - 1 \}\{\mu_0(a_0,W) - \mu_{n,j}(a_0,W)\} \big]\\
         &= \sum_{a_0 \in \{0,1\}} (-1)^{1+a_0} E_0\big[ \ind(A=a_0) \big\{\alpha_{n,j}^*(a_0 \given W) - \pi_0^{-1}(a_0 \given W) \big\}\{\mu_0(a_0,W) - \mu_{n,j}(a_0,W)\} \big]\\
         & \leq \sum_{a_0 \in \{0,1\}} E_0\big[ \pi_0(a_0 \given W) \left\{\alpha_{n,j}^*(a_0 \given W) - \pi_0^{-1}(a_0 \given W) \right\}^2 \big] \lVert \mu_0(a_0, \cdot) - \mu_{n,j}(a_0, \cdot)\rVert \\
        & \lesssim \max_{a_0 \in \{0,1\}} \big\lVert \alpha_{n,j}^*(a \given \cdot) - \pi_0^{-1}(a \given \cdot) \big\rVert_{P_{0,a_0}} \lVert \mu_0(a_0, \cdot) - \mu_{n,j}(a_0, \cdot)\rVert;
   \end{align*} 
By Theorem \ref{theorem::MSE}, we have
\[
\sum_{j \in [J]}\|\alpha_{n,j}^*(a \given \cdot) - \pi_0^{-1}(a \given \cdot) \|_{P_{0}} \lesssim o_p(n^{-1/3}) + \|f_{n,0}^{(a)} \circ \pi_{n,j}(a \given \cdot) - \pi_0^{-1}(a \given \cdot)\|_{P_{0}},
\]
where $f_{n,0}^{(a)}$ is the optimal nondecreasing calibration for the cross-fitted propensity score estimator. Using symmetry and \ref{cond::AIPW2} and \ref{cond::AIPWDR}, we conclude that $ P_0(\chi_{n,j}^* - \chi_0) = o_p(n^{-1/2})$, as desired.

Next, we will show that $E_0^n \left[\lvert (P_{n,j} - P_0)(\chi_{n,j}^* - \chi_0)\rvert\right] = o(n^{-1/2})$, which implies, by Markov's inequality, that $(P_{n,j} - P_0)(\chi_{n,j}^* - \chi_0) = o_p(n^{-1/2})$. To this end, we denote the function class:
\begin{align*}
    & \Big\{ o \mapsto \mu_{n,j}(1, w) - \mu_{n,j}(0,w) + \big\{1(a_0 = 1)\theta \circ \pi_{n,j}(1 \given w) - 1(a_0 = 0)\theta \circ \pi_{n,j}(0 \given w)\big\}\{y - \mu_n(a_0, w)\} \\
    & \quad - \mu_{0}(1, w) + \mu_{0}(0,w) - \big\{1(a_0 = 1){\pi_0^{-1}(1 \given w)} - 1(a_0 = 0){\pi_0^{-1}(0 \given w)} \big\}\{y - \mu_0(a_0, w)\}: \theta \in \mathcal{F}_{\rm anti}^{1/\eta}\Big\}
\end{align*}
by $\mathcal{G}_{n,j}$. Of note, $\mathcal{G}_{n,j}$ is deterministic conditional on the training set $\mathcal{E}^{(j)}$. With probability tending to one, we have by definition and \ref{cond::positivity}, $\alpha_{n,j}^*(a_0 \given \cdot) = f_n^{(a_0)} \circ \pi_{n,j}(a_0 \given \cdot)$ where $f_n^{(a_0)} \in \mathcal{F}_{\rm anti}^{1/\eta}$. Hence, $\chi_{n,j}^* - \chi_0 \in \mathcal{G}_{n,j}$ with probability tending to one. Without loss of generality, we assume the event $\chi_{n,j}^* - \chi_0 \in \mathcal{G}_{n,j}$ occurs. Since by Conditions \ref{cond::boundedAIPW} and \ref{cond::AIPW2}, we know that $\mu_0(a, w)$ is bounded with probability tending to one for $a \in \{0,1\}$; combined with the analysis we did in the proof of Theorem \ref{theorem::CAL}, we know that the function class $\mathcal{G}_{n,j}$ is uniformly bounded with probability tending to one, and therefore a Lipschitz transformation of $\mathcal{F}_{n, \rm{anti}}^{(j,a_0),1/\eta} \times \mathcal{F}_{n, \rm{anti}}^{(j,a_0),1/\eta}$.

An argument identical to the proof of Theorem \ref{theorem::MSE} shows that, under \ref{cond::bound}-\ref{cond::positivity2}, that $\mathcal{J}(\delta, \mathcal{G}_{n,j}) \leq K\sqrt{\delta}$ for some fixed constant $K >0$. Let $\delta_n := \|\chi_{n,j}^*- \chi_0\|_{P_0}$; observe that by Minkowski's inequality and Hölder's inequality we have
\begin{align*}
    \delta_n & \leq \max_{a_0 \in \{0,1\}}\lVert\mu_{n,j}(a_0, \cdot) - \mu_0(a_0, \cdot)\rVert_{P_0} + \max_{a_0 \in \{0,1\}}\lVert\alpha_{n,j}^*(a_0 \given \cdot) - \pi_0^{-1}(a_0 \given \cdot)\rVert_{P_0} C_Y \\
    & ~~+ \max_{a_0 \in \{0,1\}}\big\lVert\alpha_{n,j}^*(a_0 \given \cdot) \mu_{n,j}(a_0, \cdot) - \pi_0^{-1}(a_0 \given \cdot) \mu_{n,j}(a_0, \cdot) + \pi_0^{-1}(a_0 \given \cdot) \mu_{n,j}(a_0, \cdot) -  \pi_0^{-1}(a_0 \given \cdot) \mu_0(a_0, \cdot)\big\rVert_{P_0} \\
    & \lesssim \max_{a_0 \in \{0,1\}}\|\alpha_{n,j}^*(a_0 \given \cdot) - \pi_0^{-1}(a_0 \given \cdot)\|_{P_0} + \max_{a_0 \in \{0,1\}}\|\mu_{n,j}(a_0 \given \cdot) - \mu_0(a_0 \given \cdot)\|_{P_0} = o_p(1),
\end{align*}
where we used Conditions \ref{cond::positivity}, \ref{cond::boundedAIPW}-\ref{cond::AIPW} and Theorem \ref{theorem::MSE}. Hence, we can find a deterministic sequence $\varepsilon_n \downarrow 0$ such that $\delta_n = o_p(\varepsilon_n)$. Without loss of generality, we can assume that $\delta_n \leq C \varepsilon_n$ for some fixed $C > 0$, since this occurs with probability tending to one. Working on this event, we have
\begin{align*}
    E_0^n \big[\big\lvert(P_{n,j} - P_0)(\chi_{n,j}^*- \chi_0)\big\rvert\big] \lesssim    E_0^n \Big[ \sup_{g \in \mathcal{G}_{n,j}: \lVert g \rVert \leq C \varepsilon_n} \big\lvert (P_{n,j} - P_0)g \big\rvert \Big].
\end{align*}
Noting $\mathcal{G}_{n,j}$ is uniformly bounded and that $\mathcal{J}(\varepsilon_n, \mathcal{G}_{n,j}) \rightarrow 0$ since $\mathcal{J}(1, \mathcal{G}_{n,j}) < \infty$, we can apply Theorem 2.1 of \cite{van2011local} conditional on $\mathcal{E}^{(j)}$ to conclude that:
\begin{align*}
     E_0^n \big[ \abs{(P_{n,j} - P_0)(\chi_{n,j}^*- \chi_0)} \given \mathcal{E}^{(j)} \big] \lesssim n^{-1/2}   \mathcal{J}(\varepsilon_n, \mathcal{G}_{n,j}) \Big(1 + \frac{\mathcal{J}(\varepsilon_n, \mathcal{G}_{n,j})}{\sqrt{n} \delta^2} \Big) = o(n^{-1/2}),
\end{align*}
where the right-hand side is nonrandom. Taking the expectation of both sides, we find 
\begin{align*}
    E_0^n \big[\big\lvert (P_{n,j} - P_0)(\chi_{n,j}^*- \chi_0) \big\rvert\big] = o(n^{-1/2})
\end{align*}
as desired. As the above bounds hold for all $j \in [J]$, we conclude $  \psi_n^* - \psi_0 = (P_n - P_0) \chi_0 + o_p(n^{-1/2})$. Noting that $\chi_0 - P_0 \chi_0$ is the efficient influence function of the ATE for a nonparametric model, we also conclude that $\psi_n^*$ is regular and efficient.

\end{document}